\documentclass[10pt,onecolumn,twoside]{IEEEtran} 

\IEEEoverridecommandlockouts                              

\usepackage{amsmath,amsfonts,amssymb}
\usepackage{algorithm}
\usepackage{algpseudocode}
\usepackage{enumitem}
\usepackage{array}
\usepackage[noadjust]{cite}
\usepackage{textcomp}
\usepackage{verbatim}
\usepackage{graphicx}
\usepackage{tikz}
\usepackage{multirow}
\usepackage{pgfplots}
\usepackage{grffile}
\pgfplotsset{compat=newest}
\usepgfplotslibrary{groupplots}   
\usetikzlibrary{plotmarks}
\usetikzlibrary{arrows.meta}
\usepgfplotslibrary{patchplots}
\usepackage{siunitx}
\usetikzlibrary{shapes,arrows}
\tikzstyle{block} = [draw, fill=white, rectangle, minimum height=3em, minimum width=4em]
\tikzstyle{sum} = [draw, fill=white, circle, node distance=1cm]
\usetikzlibrary{backgrounds}
\usetikzlibrary{arrows.meta,calc,angles,quotes,decorations.pathmorphing}

\usepackage{hyperref}
\usepackage{cleveref}
\crefname{assmpt1}{Assumption}{Assumptions}
\Crefname{assmpt1}{Assumption}{Assumptions}

\crefrangeformat{assmpt1}{Assumptions~#3#1#4--#5#2#6}
\Crefrangeformat{assmpt1}{Assumptions~#3#1#4--#5#2#6}

\newcommand{\bbR}{\mathbb{R}}
\newcommand{\bbN}{\mathbb{N}}
\newcommand{\bbZ}{\mathbb{Z}}

\newcommand{\calD}{\mathcal{D}}
\newcommand{\calA}{\mathcal{A}}
\newcommand{\calB}{\mathcal{B}}
\newcommand{\calZ}{\mathcal{Z}}
\newcommand{\calS}{\mathcal{S}}
\newcommand{\calR}{\mathcal{R}}
\newcommand{\calH}{\mathcal{H}}
\newcommand{\calX}{\mathcal{X}}
\newcommand{\calK}{\mathcal{K}}

\newcommand{\calC}{\mathcal{C}}

\newcommand{\calI}{\mathcal{I}}
\newcommand{\calN}{\mathcal{N}}
\newcommand{\calU}{\mathcal{U}}
\newcommand{\calY}{\mathcal{Y}}

\newcommand{\rmd}{\mathrm{d}}
\newcommand{\rmr}{\mathrm{r}}

\newcommand{\bfK}{\mathbf{K}}

\newcommand{\bfk}{\mathbf{k}}
\newcommand{\bfu}{\mathbf{u}}
\newcommand{\bfy}{\mathbf{y}}
\newcommand{\bfzero}{\mathbf{0}}

\newcommand{\sfto}{\mathsf{to}}
\newcommand{\sfin}{\mathsf{in}}

\newcommand{\RMSE}{\mathsf{RMSE}}

\newcommand{\Sim}{\mathsf{Sim}}

\newcommand{\ra}{\rightarrow}

\DeclareMathOperator*{\argmin}{arg\,min}

\usepackage{amsthm}

\newtheorem{thm1}{\bf Theorem}
\newtheorem{prop1}{\bf Proposition}
\newtheorem{lem1}{\bf Lemma}
\newtheorem{assmpt1}{\bf Assumption}
\newtheorem{defn1}{\bf Definition}
\newtheorem{rem1}{\bf Remark}

\newtheorem{cor1}{\bf Corollary}

\newtheorem{prob1}{\bf Problem}

\newenvironment{defn}{\begin{defn1}}{\hfill$\square$\end{defn1}}
\newenvironment{asm}{\begin{assmpt1}}{\hfill$\square$\end{assmpt1}}
\newenvironment{rem}{\begin{rem1}}{\hfill$\square$\end{rem1}}
\newenvironment{lem}{\begin{lem1}}{\hfill$\square$\end{lem1}}
\newenvironment{thm}{\begin{thm1}}{\hfill$\square$\end{thm1}}

\newenvironment{prop}{\begin{prop1}}{\hfill$\square$\end{prop1}}

\title{\LARGE \bf
Inverse Learning-Based Output Feedback Control of Nonlinear Systems with Verifiable Guarantees
}

\author{Yeongjun Jang, Hamin Chang, Heein Park, Hyeonyeong Jang, Takashi Tanaka, Hyungbo Shim
\thanks{*This work was supported by the National Research Foundation of Korea(NRF) grant funded by the Korea government(MSIT) (No. RS-2022-00165417) and the grant from Hyundai Motor Company's R\&D Division.
}
\thanks{Y.~Jang, H.~Park, H.~Jang, and H.~Shim are with ASRI, Department of Electrical and Computer Engineering, Seoul National University, Seoul, 08826, Korea (email: jangyj0512@snu.ac.kr, \{phoinee,hyjang\}@cdsl.kr).
}
\thanks{H.~Chang, and T.~Tanaka are with the Elmore Family School of Electrical and Computer Engineering, Purdue University, West Lafayette, IN, 47907, USA (email: \{chan1232,tanaka16\}@purdue.edu).
}
\thanks{T.~Tanaka is also with the School of Aeronautics and Astronautics, Purdue University, West Lafayette, IN, 47907, USA.
}
}

\begin{document}

\maketitle
\thispagestyle{plain} 
\pagestyle{plain} 

\begin{abstract}
In this paper, we present a data-driven output feedback controller for nonlinear systems that achieves practical output regulation, using noise-free input/output measurement data.
The proposed controller is based on (i) an inverse model of the system identified via kernel interpolation, which maps a desired output and the current state to the corresponding desired control input; and (ii) a data-driven reference selection framework that actively chooses a suitable desired output from the dataset which has been used for the identification. 
We establish a verifiable sufficient condition on the dataset under which the proposed controller guarantees practical output regulation.
Numerical simulations demonstrate the effectiveness of the proposed controller, with additional evaluations in the presence of output measurement noise to assess its robustness empirically.
\end{abstract}

\section{Introduction}\label{sec:intro}

Data-driven control aims to design controllers directly from measured data, thereby avoiding the need to identify an explicit mathematical model of the underlying system.
This method has emerged as an effective alternative to model-based control, particularly when obtaining a model is too costly or requires substantial domain expertise \cite{HouzWang13}. 
In the case of linear systems, for example, data-driven control has been extensively studied \cite{DepeTesi19,CoulLyge19,VanwEisi20,BerbAllg25,VanwCaml25}, building on the behavioral approach and the fundamental lemma introduced in \cite{WillRapi05}.

Recently, there has been growing interest in data-driven control for nonlinear systems.
The main challenge in this direction is to establish theoretical closed-loop guarantees \cite{MartScho23}, and existing works that provide such guarantees rely on conditions that are often difficult to verify in practice.
For example, \cite{BerbKohl22} proposed a model predictive control (MPC) scheme that exploits local linearizations of the system, whose closed-loop guarantees require recursive feasibility of the MPC optimization problem. 
The works of \cite{DaitSzna21} and \cite{GuomDepe21} formulated data-dependent optimization problems based on linear matrix inequality (LMI) and sum-of-squares (SOS) conditions, respectively, to synthesize stabilizing controllers.  
In \cite{StraBerb23}, the Koopman operator is employed to lift a nonlinear system into a higher-dimensional linear system, enabling an LMI-based control design that ensures stability. 
Although these results provide formal guarantees, verifying the feasibility of the associated MPC optimization problem, and the LMI and SOS conditions can be nontrivial and computationally intensive.

In this context, kernel methods have become a popular tool for data-driven control of nonlinear systems due to their ability to provide practical and reliable regression error bounds \cite{SrinKrau12,ChowGopa17,FiedSche21Err,ReedLaur25,SchaMadd22}.
Moreover, their nonparametric nature allows them to represent a broad class of functions \cite{SchoSmol02,WillRasm06,KanaHenn18}.
These methods have mainly been utilized to identify a \textit{forward model} of the system, followed by the design of a controller that compensates for residual (learning) errors that may occur \cite{Koci16}.
However, the resulting models are often highly nonlinear with limited physical meaning, which complicates the controller design.
Consequently, most existing approaches have been restricted to MPC frameworks \cite{HewiKabz19,MaddScha21KPC,NguyPfef22,HuanLyge23,DeJoLaza24}, which require solving an optimization problem at each time step. 
This incurs substantial computational burden, and theoretical guarantees are typically stated under the assumption that the online optimization problem is recursively feasible, which is often difficult to verify.
Other existing approaches have focused on designing robust linear controllers under certain simplifications, such as assuming bounded nonlinearities \cite{FiedSche21} or considering control-affine systems with constant input matrices \cite{HuzhDepe23}.

On the other hand, there have been approaches that utilize kernel methods to identify an \textit{inverse model} of the system \cite{WillKlan08,RelaMuno23,KimhChan23,TanaFagi17,JangChan24}, which maps a desired output and the current state to the corresponding desired control input.
This obviates the need for designing a controller for the identified model because it can directly act as a tracking controller.
Provided that the reference trajectory is feasible, \cite{KimhChan23} and \cite{TanaFagi17} established closed-loop guarantees on tracking performance.
However, determining whether a reference trajectory is feasible without explicit knowledge of the system dynamics is generally unrealistic, which limits the practicality of such guarantees.

To address this limitation, \cite{JangChan24} introduced a data-driven framework that actively selects suitable reference points from the input/state data used to identify the inverse model.
The key idea is to leverage interpolation/regression error bounds for kernel methods to construct, for each data point, a region of the state space where using that point as a reference ensures a desirable closed-loop behavior.  
Building upon this framework, \cite{JangChan24} established a verifiable sufficient condition on the dataset under which a data-driven state feedback controller renders the closed-loop system ultimately bounded.

In this paper, we propose a data-driven \textit{output} feedback controller for nonlinear systems that achieves practical output regulation.
The controller is designed based on an inverse model of the system identified from noise-free input/output measurements via kernel interpolation, and thus, does not require full state measurements.
Based on the error bound of kernel interpolation, we derive a verifiable sufficient condition on the dataset under which the proposed controller guarantees practical output regulation.
The effectiveness of the proposed controller is demonstrated through numerical simulations, and further tests carried out in the presence of output measurement noise indicate that the proposed controller can remain effective under noisy environments.

Specifically, we focus on systems represented in the nonlinear autoregressive exogenous (NARX) form, a popular framework for modeling nonlinear dynamics \cite{CadeRive16,Bill13,MishMark21,PisoFari09}.
NARX models can be reformulated into an augmented state-space representation, where a one-step ahead output is fully determined by an augmented state comprising past inputs and outputs.
A direct application of the framework in \cite{JangChan24}, however, would ultimately bound the entire augmented state, which unnecessarily drives both input and output sequences toward zero. 
This is particularly problematic for systems that require time-varying or oscillatory control inputs to regulate the output.
In this regard, we adopt and generalize the reference selection framework of \cite{JangChan24} to accommodate the augmented state formulation.

The remainder of the paper is organized as follows. 
Section~\ref{sec:prob} defines the inverse model of a system and formulates the problem.
Section~\ref{sec:ilc} presents a method to identify the inverse model using kernel interpolation and analyze its error bound. 
Section~\ref{sec:control} provides explicit construction of the proposed data-driven controller, and Section~\ref{sec:sim} demonstrates simulation results. 
Finally, Section~\ref{sec:conclusion} concludes the paper.

\emph{Notation:} 
Let $\bbR$, $\bbR_{\ge 0}$, $\bbZ$, $\bbZ_{\ge 0}$, and $\bbN$ denote the set of real numbers, nonnegative real numbers, integers, nonnegative integers, and positive integers, respectively. 
For $r\in \bbR_{\ge 0}$ and $x\in \bbR^n$, a closed ball of radius $r$ centered at $x$ is denoted by $\calB(x,r):=\{ z\in \bbR^n \mid \|x-z \|\le r \}$.
The set of all $x\in \bbR$ such that $a\le x\le b$ is denoted by $[a,b]$.
For a sequence $v_1,\ldots,v_n$ of vectors or scalars, we define $\begin{bmatrix}
  v_1;\cdots;v_n  
\end{bmatrix}:= 
\begin{bmatrix}
    v_1^\top \cdots v_n^\top    
\end{bmatrix}^\top$.
For a signal $v:\bbZ \ra \bbR^n$ and integers $a,b\in\bbZ$ such that $a\le b$, we define $v_{[a,b]} := 
\begin{bmatrix}
    v(a); v(a+1) ; \cdots ; v(b)    
\end{bmatrix}$.
Let $\bfzero_{m\times n}\in\bbZ^{m\times n}$ and $I_n\in\bbZ^{n\times n}$ denote the zero matrix and identity matrix, respectively. 
We use the shorthand notation $\{x_i\}_{i=1}^n:=\{x_1,x_2,\ldots,x_n\}$ to denote the set of elements $x_i$ for $i=1,\ldots,n$.

\section{Problem Formulation}\label{sec:prob}

Consider a discrete-time system represented by a nonlinear autoregressive exogenous (NARX) model of the form
\begin{align}\label{eq:sysNARX}
        y(t+1) =f(y_{[t-n+1,t]}, u_{[t-n+1,t]}),
\end{align}
where $u(t)\in\bbR^m$ is the input, $y(t)\in\bbR^p$ is the output, and $n\in\bbN$ is the model order.
For the sake of simplicity, we focus on the case $m=p=1$ and demonstrate that our result can be extended to the multi-input multi-output case in Remark~\ref{rem:mimo}.
It is assumed that the function $f$ is unknown, while $n$ is known.

The goal of this paper is to design a data-driven output feedback controller using input/output measurement data of \eqref{eq:sysNARX} that achieves practical output regulation with a desired accuracy $\delta>0$ in finite time.
That is, for some $\kappa\in \bbN$ that depends on the accuracy $\delta$ and the initial condition $(y_{[-n+1,0]},u_{[-n+1,0]})$ at $t=0$, the controller should ensure 
\begin{equation}\label{eq:goal}
    \| y(t) \|\le \delta \quad \forall t\ge \kappa.
\end{equation}
In addition, we aim to establish a verifiable sufficient condition on the dataset under which the proposed controller guarantees \eqref{eq:goal}. 

In line with \cite{KimhChan23} and \cite{JangChan24}, we identify an inverse model of \eqref{eq:sysNARX} and utilize it as a foundation of the controller.
In order to formally define an inverse model, the following definitions and assumption are introduced. 
Let us rewrite system \eqref{eq:sysNARX}, with a slight abuse of notation, as 
\begin{equation}\label{eq:sys}
    y(t+1) = f(\zeta(t), u(t)),
\end{equation}
where
\begin{equation}\label{eq:zeta}
    \zeta(t) := \begin{bmatrix}
             y_{[t-n+1,t]} \\ u_{[t-n+1,t-1]}
         \end{bmatrix} \in \bbR^{2n-1}
\end{equation}
is referred to as the \textit{augmented state} at time step $t\in \bbZ$, consisting of the most recent $n$ outputs and $n-1$ inputs. 
The set of feasible augmented states is defined as 
\begin{align*}\label{eq:feasible}
    \calZ:=\{\zeta\in\bbR^{2n-1} \mid \exists u,y:\bbZ\to\bbR~
    \mbox{satisfying} ~\eqref{eq:sys}~ 
    \mbox{for all}~t\in\bbZ~ \mbox{and}~\zeta=\zeta(\tau)~\mbox{for some}~\tau\in\bbZ  \}.
\end{align*}
In other words, $\calZ\subset \bbR^{2n-1}$ collects all augmented states that are realizable by some input/output trajectory of system \eqref{eq:sys}.
For any $\zeta\in\calZ$, we define the one-step reachable set of outputs by 
\begin{equation}\label{eq:reachable}
    \calR(\zeta) := \left\{y^+ \in \bbR \mid \exists u\in\bbR \ \mbox{such that} \ y^+=f(\zeta,u)  \right\},
\end{equation}
and assume that the mapping $f(\zeta,\cdot)$ is injective for all $\zeta\in\calZ$ as follows.
\begin{asm}\label{asm:inverse}\upshape
    For any $\zeta\in\calZ$ and $y^+\in\calR(\zeta)$, there exists a unique $u\in\bbR$ such that $y^+ = f(\zeta,u)$.
\end{asm}
Assumption~\ref{asm:inverse} implies that system \eqref{eq:sys} has a global relative degree one, since the current input $u(t)$ directly influences the next output $y(t+1)$ and is uniquely determined from $\zeta(t)$ and $y(t+1)$. 
In Section~\ref{subsec:extend}, we discuss how our framework can be extended to NARX models with input delays, which corresponds to the case with global relative degree greater than one.

Under Assumption~\ref{asm:inverse}, we define the inverse model of \eqref{eq:sys} as follows.

\begin{defn}\label{def:inverse}\upshape
    The inverse model of system \eqref{eq:sys} is defined as a function $c:\bbR^{2n}\ra \bbR$ such that
    \begin{equation}\label{eq:inv}
            y^+ = f(\zeta,c([y^+;\zeta]))
    \end{equation}
    for all $\zeta\in\calZ$ and $y^+\in\calR(\zeta)$.
\end{defn}

The inverse model $c$ can be interpreted as a function that maps a desired output $y^+$ and an augmented state $\zeta$ to the corresponding desired control input $c([y^+;\zeta])$ such that \eqref{eq:inv} holds.
Definition~\ref{def:inverse} is implicit, as it does not specify the value of $c([y^+;\zeta])$ when $\zeta\notin \calZ$ or $y^+\notin \calR(\zeta)$.  
However, we emphasize that any function $c$ satisfying Definition~\ref{def:inverse} can serve as the inverse model in our approach, and its specific choice does not affect the theoretical formulation or results.
For example, one may choose the inverse model as the function $c$ that renders $f(\zeta,c([y^+;\zeta]))$ to be the projection of $y^+$ onto $\calR(\zeta)$, as in \cite{JangChan24}.

\begin{rem}\upshape
    Suppose that system \eqref{eq:sys} is affine-in-control, written as 
    \begin{equation}\label{eq:sysEx}
        y(t+1) = h(\zeta(t))+g(\zeta(t))\cdot u(t)   
    \end{equation}
    with some functions $h:\bbR^{2n-1}\to \bbR$ and $g:\bbR^{2n-1}\to \bbR$, where $g(\zeta)\ne 0$ for all $\zeta\in\calZ$. 
    Then, it follows that $\calR(\zeta)=\bbR$ for all $\zeta\in\calZ$, and Assumption~\ref{asm:inverse} holds since the input $u\in\bbR$ satisfying $y^+=f(\zeta,u)$ is uniquely determined by $u=(y^+-h(\zeta))/g(\zeta)$.
\end{rem}

The inverse model $c$ can be directly utilized as a feedback controller that tracks a given reference trajectory;
given $y_\rmr(t+1)\in\calR(\zeta(t))$, a controller of the form 
\begin{equation}\label{eq:invcontrol}
    u(t) = c([y_\rmr(t+1);\zeta(t)])
\end{equation}
achieves 
\begin{equation*}
    \begin{split}
        y(t+1) = f(\zeta(t),c([y_\rmr(t+1);\zeta(t)])) = y_\rmr(t+1)
    \end{split}
\end{equation*}
by Definition~\ref{def:inverse}.
However, the design of controller \eqref{eq:invcontrol} requires both exact knowledge of the inverse model $c$ and a reference point $y_\rmr(t+1)$ that is one-step reachable from $\zeta(t)$, clearly necessitating knowledge of the function $f$.

In this context, we aim to propose an inverse learning-based controller of the form 
\begin{equation}\label{eq:igp}
    u(t) = \hat{c}([y_\rmr(t+1);\zeta(t)])
\end{equation}
that guarantees \eqref{eq:goal}, where (i) $\hat{c}$ is a data-driven identification of the inverse model $c$ based on input/output measurement data of \eqref{eq:sys}; 
(ii) the reference point $y_\rmr(t+1)$ is actively chosen from the dataset used for the identification.

Since $f$ is assumed to be unknown, we cannot, in general, verify one-step reachability of a chosen reference point. 
Thus, we make a technical assumption that possibly conservative Lipschitz constants of the function $f$ and its inverse model $c$ are known.
This will enable us to quantify the deviation of the actual output $y(t+1)$, obtained by applying \eqref{eq:igp}, from the reference point $y_\rmr(t+1)$.

  \begin{asm}\label{asm:lipschitz}\upshape
     There exist known constants $L_f>0$ and $L_c>0$ such that 
    \begin{align*}
        \left\| f(\zeta,u) - f(\zeta',u') \right\| \le L_f \left\| 
        \begin{bmatrix}
            \zeta-\zeta' \\
            u - u'
        \end{bmatrix} 
        \right\|, ~~~~~~
        \left\| c(\xi) - c(\xi') \right\| \le L_c  \| \xi-\xi' \| 
    \end{align*}
    for all $\zeta,\zeta'\in\bbR^{2n-1}$, $u,u'\in \bbR$, and $\xi,\xi'\in\bbR^{2n}$.
\end{asm}

\begin{rem}\upshape\label{rem:uniformObsv}
    We show that nonlinear systems of the form
    \begin{align} \label{eq:sysState}
        x(t+1) &= F(x(t),u(t)), \\
        y(t) &= h(x(t)), \nonumber
    \end{align}
    can be equivalently reformulated into the NARX form \eqref{eq:sys} if it is uniformly $n$-observable \cite{MoraGriz02}.
    Here, $x(t)\in\bbR^{n_x}$ is the state, and $u(t)\in\bbR$ and $y(t)\in\bbR$ are the input and output, respectively, as in \eqref{eq:sys}. 
    Uniform $n$-observability is frequently assumed in nonlinear observer design \cite{Hanb09,GautHamm02}, as it enables the reconstruction of the initial state from a finite number of input/output measurements.   
    That is, it implies that there exists a mapping $H:\bbR^n\times \bbR^n \to \bbR^{n_x}$ such that 
    \begin{equation*}
        x(t-n+1) =H(y_{[t-n+1,t]}, u_{[t-n+1,t]})
    \end{equation*}
    for any pair of input sequence $u_{[t-n+1,t]}$ and output sequence $y_{[t-n+1,t]}$ of system \eqref{eq:sysState}, and the corresponding state $x(t-n+1)$.
    
    Now, let $\Phi:\bbR^{n_x}\times \bbR^n \to \bbR^{n_x}$ denote the $n$-step state transition mapping that satisfies
    \begin{equation*}
        x(t+1) = \Phi(x(t-n+1),u_{[t-n+1,t]}).
    \end{equation*}
    Substituting this into \eqref{eq:sysState} yields
    \begin{align*}
        y(t+1) &= h(x(t+1))\\
        &= h(\Phi(x(t-n+1),u_{[t-n+1,t]})) \\
        &= h(\Phi(H(y_{[t-n+1,t]}, u_{[t-n+1,t]}),u_{[t-n+1,t]})) =:f(\zeta(t),u(t)),
    \end{align*}
    which is in the NARX form \eqref{eq:sys}.
\end{rem}

\section{Inverse Model Identification and
Error Bounds}\label{sec:ilc}

\subsection{Learning the inverse model}\label{subsec:inv}

For the design of controller \eqref{eq:igp}, we employ kernel interpolation (KI) to identify the inverse model $c$ as $\hat{c}$.
 To conduct the identification, we first collect and rearrange input/output data of system \eqref{eq:sys} into input/output data of the inverse model $c$. 
Consider a $T$-length input/output trajectory\footnote{The superscript $\rmd$ indicates sample data collected from offline experiments conducted prior to control design.} of system \eqref{eq:sys} denoted by $(u_{[0,T-1]}^\rmd,y_{[0,T]}^\rmd)$, where $T\ge n$.
Then,
\begin{equation*}
    y^\rmd(i+n-1) = f(\zeta^\rmd(i+n-2),u^\rmd(i+n-2))
\end{equation*}
for all $1\le i \le T-n+1$, where
\begin{equation*}
    \zeta^\rmd(i+n-2) := 
    \begin{bmatrix}
        y_{[i-1,i+n-2]}^\rmd \\ u_{[i-1,i+n-3]}^\rmd  
    \end{bmatrix}\in \bbR^{2n-1}.  
\end{equation*}
By the definition of the inverse model $c$,
\begin{equation*}
    u^\rmd(i+n-2) = c([y^\rmd(i+n-1);\zeta^\rmd(i+n-2)])
\end{equation*}
holds for all $1\le i \le T-n+1$, and this allows us to construct a training dataset $\cal{D}$ that consists of input/output data of the inverse model $c$: 
\begin{equation}\label{eq:dataset}
    \calD:=\{([y_i^+;\zeta_i],u_i) \}_{i=1}^N,
\end{equation}
where $N:=T-n+1$ is the number of training data and 
\begin{equation*}
    [y_i^+;\zeta_i]:=[y^\rmd(i+n-1);\zeta^\rmd(i+n-2)], ~~ u_i := u^\rmd(i+n-2),
\end{equation*}
are the training input and output, respectively.
For notational simplicity, we also define
\begin{equation*}
    \begin{split}
        \xi_i:=[y_i^+;\zeta_i], ~~~~
        \calD_\sfin := \{ \xi_i\}_{i=1}^N, ~~~~ \calD_\sfto := \{ y_i^+ \}_{i=1}^N.
    \end{split}
\end{equation*}

It is emphasized that the training dataset $\calD$ can alternatively be constructed from multiple short input/output trajectories of \eqref{eq:sys}. 
This aspect is especially advantageous when the system is unstable, as it allows the collection of a sufficiently large training dataset while avoiding numerical instability issues.

Given the dataset $\calD$, KI obtains an estimate of $c$ by appropriately interpolating the data points.
To understand KI, two key concepts are crucial: (strictly) positive definite kernels, which measure similarity between data points, and reproducing kernel Hilbert space (RKHS), which provides a mathematical framework for deriving interpolation error bounds.

\begin{defn}\label{def:pdk}\upshape
    For a nonempty set $\calX$, a symmetric function $k:\calX\times \calX \ra \bbR$ is called a kernel. 
    The kernel $k$ is positive definite if for any $N\in\bbN$ and any $\xi_1,\ldots,\xi_N\in \calX$, the matrix
    \begin{equation}\label{eq:Gram}
        \bfK:=\begin{bmatrix}
            k(\xi_1,\xi_1) & \cdots & k(\xi_1,\xi_N) \\
            \vdots & \ddots & \vdots \\
            k(\xi_N,\xi_1) & \cdots & k(\xi_N,\xi_N)
        \end{bmatrix}
    \end{equation}
    is positive semidefinite. 
    The kernel $k$ is strictly positive definite if $\bfK$ is positive definite for any $N\in\bbN$ and any pairwise distinct $\xi_1,\ldots,\xi_N\in \calX$.
\end{defn}
\begin{defn}\label{def:RKHS}\upshape
    For a positive definite kernel $k:\calX \times \calX \ra \bbR$, a Hilbert space $\calH$ of real-valued functions $h:\calX\ra\bbR$ equipped with an inner product $\langle \cdot, \cdot \rangle_\calH$ is said to be an RKHS with the reproducing kernel $k$ if
    \begin{equation*}
        k(\cdot,\xi)\in\calH \ \, \mbox{and} \ \, h(\xi)=\langle h, k(\cdot,\xi) \rangle_\calH
    \end{equation*}
    for all $\xi \in \calX$ and $h\in\calH$.
\end{defn}

By the Moore-Aronszajn theorem \cite{AronNach50}, there exists a one-to-one correspondence between RKHSs and positive definite kernels.
In this regard, we denote by $\calH_k$ the unique RKHS having a positive definite kernel $k$ as the reproducing kernel.
Also, we denote the norm induced by the inner product of $\calH_k$ by $\| \cdot \|_{\calH_k}:=\langle \cdot,\cdot \rangle_{\calH_k}$.

\begin{rem}\upshape
    The abstract definition of RKHS given in Definition~\ref{def:RKHS} is further explored.
    For a positive definite kernel $k:\calX \times \calX \ra \bbR$, the function $k(\cdot,\xi)$ is called the canonical feature map of $\xi\in\calX$. 
    The RKHS $\calH_k$ can then be explicitly constructed as the closure of the set of finite linear combinations of canonical feature maps \cite[Section~2.3]{KanaHenn18}:
    \begin{equation}\label{eq:RKHSdef}
        \calH_k= \left\{h=\sum_{i=1}^\infty \alpha_i k(\cdot,\xi_i) \mid \ \alpha_1,\alpha_2,\ldots \in \bbR, \ \xi_1,\xi_2,\ldots \in \calX, ~\|h\|_{\calH_k}^2 := \sum_{i=1}^\infty \sum_{j=1}^\infty \alpha_i \alpha_j k(\xi_i,\xi_j) < \infty \right\}.
    \end{equation}
    This construction shows that the elements of $\calH_k$ naturally inherit geometric properties of $k$, such as smoothness or periodicity.
\end{rem}


Let us fix $k:\bbR^{2n}\times\bbR^{2n}\ra\bbR$ as a  strictly positive definite kernel for the remainder of this paper. 
Given the dataset $\calD$ in \eqref{eq:dataset}, KI seeks the minimal norm interpolant in $\calH_k$ by solving the following optimization problem:
\begin{equation}\label{eq:krr}
    \hat{c} = \argmin_{\mathsf{c}\in\calH_k} \left\| \mathsf{c} \right\|_{\calH_k}^2 \ \ \mbox{s.t.} \ \ \mathsf{c}(\xi_i)=u_i \ \ \forall i=1,\ldots,N.
\end{equation}
Thanks to the celebrated representer theorem \cite[Theorem~16.1]{Wend04}, the closed-form solution of the infinite-dimensional optimization problem \eqref{eq:krr} can be written as
\begin{equation}\label{eq:chatCF}
    \hat{c}(\xi)=\bfk^\top(\xi)\bfK^{-1}\bfu,
\end{equation}
where $\bfK\in\bbR^{N\times N}$ is the matrix in \eqref{eq:Gram} constructed from $\calD_{\sfin}$, and
$\bfk:\bbR^{2n} \to \bbR^N$ and $\bfu\in\bbR^N$ are defined by
\begin{align*}
    \bfk(\xi) := 
    \begin{bmatrix}
        k(\xi_1,\xi) ;
        \cdots ;
        k(\xi_N,\xi)
    \end{bmatrix}, ~~~~~
    \bfu := \begin{bmatrix}
        u_1 ; \cdots ; u_N
    \end{bmatrix}.
\end{align*}
By removing redundant data from the training dataset, we can readily ensure that the training inputs $\xi_1,\ldots,\xi_N\in\calD_\sfin$ are pairwise distinct. 
Since the kernel $k$ is strictly positive definite, this guarantees the invertibility of the matrix $\bfK$.

\subsection{Error bounds}\label{subsec:err}
A notable feature of KI is that it provides an explicit upper bound on the error between the true inverse model $c$ and the estimate $\hat{c}$ defined by \eqref{eq:chatCF}.
This feature will play a central role in establishing formal guarantees for the proposed controller. 
In fact, other learning methods could be employed as long as they provide an explicit error bound as KI.

In order to derive a practical and reliable error bound, we impose a regularity condition on the complexity of the inverse model $c$.

\begin{asm}\label{asm:RKHSnorm}\upshape
    The inverse model $c$ of system \eqref{eq:sys} belongs to $\calH_k$, and $\| c\|_{\calH_k}\le \Gamma$ for some known constant $\Gamma\ge 0$. 
\end{asm}

The RKHS norm quantifies the smoothness of a function with respect to the associated reproducing kernel; the larger the norm, the less smooth the function is \cite[Section~2.3]{KanaHenn18}. 
Obtaining a reliable yet non-conservative upper bound on the RKHS norm of a target function remains an open problem \cite[Section~6.1]{FiedMenn24}.
Nevertheless, the RKHS norm has been a central tool in deriving error bounds for KI and Gaussian process regression \cite{ChowGopa17,FiedSche21Err,MaddScha21}, and is often assumed to be available in related works due to its theoretical utility \cite{FiedSche21,NguyPfef22,HuzhDepe23}.
In practice, an upper bound $\Gamma$ can be obtained from data using a safe Bayesian optimization algorithm in \cite{TokmKris24}, or by scaling $\|\hat{c}\|_{\calH_k}=\bfu^\top \bfK^{-1}\bfu$ with a safety factor greater than $1$, since $\|\hat{c}\|_{\calH_k}\le \|c\|_{\calH_k}$ always holds \cite[Section~16]{Wend04}.

An intuition behind the derivation of an error bound for $\hat{c}$ is that the estimate is expected to be more accurate at test points that are closer to a training point.
In this regard, we focus on the case in which the kernel $k$ is isotropic and decreasing, which naturally provides stronger correlations to data points that are closer to each other.

\begin{defn}\label{def:iso}\upshape
    A kernel $k:\bbR^{2n}\times\bbR^{2n} \to \bbR$ is isotropic and decreasing if there exists a decreasing function $\bar{k}:\bbR_{\ge 0} \to \bbR$ such that 
    \begin{equation}\label{eq:iso}
        k(\xi,\xi')=\bar{k}(\|\xi-\xi'\|) 
    \end{equation}
    for all $\xi,\xi'\in\bbR^{2n}$.
\end{defn}

Considering a strictly positive definite, isotropic, and decreasing kernel is not restrictive, as it includes many commonly used kernels, such as squared-exponential, Laplacian, Mat\'ern, and rational quadratic kernels.
The following lemma presents an error bound for $\hat{c}$ under this setting.

\begin{lem}[{\cite[Lemma~1]{JangChan24}}]\label{lem:errbound}\upshape
    Consider a strictly positive definite kernel $k:\bbR^{2n}\times \bbR^{2n}\to \bbR$ that is isotropic and decreasing, and let $\hat{c}$ be the corresponding kernel interpolant given by \eqref{eq:chatCF}.
    Then, there exists a known class $\calK$ function $\eta:\bbR_{\ge 0}\ra\bbR_{\ge 0}$ such that for any $\xi \in \bbR^{2n}$, 
    \begin{equation}\label{eq:errbound}
        \|c(\xi) - \hat{c}(\xi) \| \le \eta(\epsilon(\xi)),
    \end{equation}
    where $\epsilon(\xi):=\min_{\xi_i\in\calD_\sfin} \|\xi_i-\xi \|$.
\end{lem}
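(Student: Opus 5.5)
We plan to derive the bound from the standard RKHS power-function estimate and then use the isotropic, decreasing structure of $k$ to reduce it to a function of $\epsilon(\xi)$ alone. Fix $\xi\in\bbR^{2n}$. By the reproducing property (Definition~\ref{def:RKHS}) and the closed form \eqref{eq:chatCF}, $c(\xi)-\hat c(\xi) = \langle c, k(\cdot,\xi)-\sum_{i=1}^N a_i(\xi) k(\cdot,\xi_i)\rangle_{\calH_k}$, where $a(\xi):=\bfK^{-1}\bfk(\xi)$; here we use $\bfu = [c(\xi_1);\cdots;c(\xi_N)]$, which holds since the data are generated by $c$. Cauchy--Schwarz together with Assumption~\ref{asm:RKHSnorm} then gives
\begin{equation*}
\|c(\xi)-\hat c(\xi)\| \le \Gamma\, P(\xi), \qquad P(\xi)^2 := k(\xi,\xi) - \bfk^\top(\xi)\bfK^{-1}\bfk(\xi).
\end{equation*}
(We could sharpen $\Gamma$ to $\sqrt{\Gamma^2-\|\hat c\|_{\calH_k}^2}$, but $\Gamma$ suffices.)

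The key step is to bound the power function $P(\xi)$ using only the nearest data point. $P(\xi)$ is the minimal value of $\|k(\cdot,\xi)-\sum_i a_i k(\cdot,\xi_i)\|_{\calH_k}$ over all coefficient vectors $a\in\bbR^N$, since KI is the orthogonal projection onto $\mathrm{span}\{k(\cdot,\xi_i)\}$. We may therefore choose the suboptimal coefficients $a=e_j$, where $\xi_j$ attains $\epsilon(\xi)$. This yields
\begin{equation*}
P(\xi)^2 \le k(\xi,\xi) - 2k(\xi,\xi_j) + k(\xi_j,\xi_j) = 2\bigl(\bar k(0)-\bar k(\epsilon(\xi))\bigr)
\end{equation*}
by \eqref{eq:iso}. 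We then define $\eta(s) := \Gamma\sqrt{2(\bar k(0)-\bar k(s))}$, which is known because $\Gamma$ and $\bar k$ are known.

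The remaining work, which we expect to be the main (if mild) obstacle, is verifying that $\eta$ is class $\calK$. We have $\eta(0)=0$. Strict monotonicity follows if $\bar k$ is strictly decreasing, which is how we read ``decreasing'' in Definition~\ref{def:iso}; otherwise we would replace $\eta$ with a strictly increasing majorant such as $\eta(s)+s$. For continuity, note that a monotone $\bar k$ has at most jump discontinuities. Continuity at $0$ should follow because $k$, being the kernel of an RKHS of the continuous functions of interest, is continuous on the diagonal. Failing that, we would again pass to a continuous class $\calK$ upper envelope of $\eta$, which exists since $\eta$ is nondecreasing, bounded by $\Gamma\sqrt{2\bar k(0)}$ on compacts, and vanishes at $0$. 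In either case the bound \eqref{eq:errbound} follows for every $\xi$.
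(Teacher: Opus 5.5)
Your core argument is correct and is the standard one. The reproducing property and Cauchy--Schwarz give $\|c(\xi)-\hat{c}(\xi)\|\le\Gamma P(\xi)$, and testing the minimal-norm characterisation of $P(\xi)$ with $e_j$ gives $P(\xi)^2\le 2(\bar{k}(0)-\bar{k}(\epsilon(\xi)))$. The paper offers no proof of its own (the lemma is imported from \cite[Lemma~1]{JangChan24}), so there is no competing route to compare.

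What does not go through is the step you flagged: continuity of $\eta$ at $0$. It is not implied by the hypotheses. Assumption~\ref{asm:lipschitz} makes $c$ Lipschitz, but that says nothing about $k$ or about the rest of $\calH_k$, and a strictly positive definite, isotropic, decreasing kernel can jump at the origin. Your fallback fails in exactly that case, since a nondecreasing $\eta$ with $\eta(0)=0<\lim_{s\downarrow 0}\eta(s)$ has no continuous majorant vanishing at $0$.

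No argument can repair this. Let $k$ be the squared-exponential kernel $e^{-\|\xi-\xi'\|^2/16}$ plus $\lambda>0$ on the diagonal $\xi=\xi'$; it is strictly positive definite, isotropic and strictly decreasing. Take $N=1$, $\xi_1=0$, and $c=e^{-\|\cdot\|^2/16}$. This $c$ is Lipschitz and lies in $\calH_k$ with norm at most $1$, because the RKHS of a sum of kernels contains that of each summand. Then $\hat{c}(\xi)=e^{-\|\xi\|^2/16}/(1+\lambda)$ for $\xi\neq 0$, so $\|c(\xi)-\hat{c}(\xi)\|\to\lambda/(1+\lambda)>0$ as $\epsilon(\xi)\to 0$, and no class $\calK$ bound exists.

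Hence the lemma tacitly requires $\lim_{s\downarrow 0}\bar{k}(s)=\bar{k}(0)$, which holds for every kernel the paper lists. You should state this as a hypothesis rather than try to derive it. Once it is assumed, your $\eta$ is continuous everywhere, because Cauchy--Schwarz gives $|\bar{k}(a)-\bar{k}(b)|^2\le 2\bar{k}(0)\bigl(\bar{k}(0)-\bar{k}(|a-b|)\bigr)$. Adding $s$ then handles strict monotonicity and the degenerate case $\Gamma=0$, where your $\eta\equiv 0$ is not of class $\calK$.

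On the constant: testing with $(\bar{k}(\epsilon)/\bar{k}(0))e_j$ instead of $e_j$, where $\epsilon=\epsilon(\xi)$, gives the exact one-point power function $P(\xi)^2\le\bar{k}(0)-\bar{k}(\epsilon)^2/\bar{k}(0)$. This improves your $\eta$ by up to a factor $\sqrt{2}$. In the example of Section~\ref{sec:sim} it gives $\sqrt{1-e^{-\epsilon^2/8}}$, against your $\sqrt{2(1-e^{-\epsilon^2/16})}$; both exceed the $\eta(\epsilon)=\sqrt{1-e^{-\epsilon^2/16}}$ used there. The value $\sqrt{1-e^{-\epsilon^2/8}}$ is attained with a single data point: take $c$ proportional to $k(\cdot,\xi)-\bar{k}(\epsilon)k(\cdot,\xi_1)$, so that $u_1=0$ and $\hat{c}\equiv 0$. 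Hence no bound that depends only on $\Gamma$, $k$ and $\epsilon(\xi)$ can be smaller, and the discrepancy with the paper's formula is not a weakness of your argument.
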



Although tightening the upper bound function $\eta$ is beyond the scope of this paper, it is worth noting that utilizing a tighter $\eta$ directly enhances the performance of the proposed controller.

\section{Inverse Learning-Based Control}\label{sec:control}

In this section, we complete the design of controller \eqref{eq:igp} by presenting a data-driven framework that actively selects an appropriate reference point $y_\rmr(t+1)$ from $\calD$, based on the error bound of $\hat{c}$ stated in Lemma~\ref{lem:errbound}.
We then establish a verifiable sufficient condition on the training dataset $\calD$ under which the proposed controller guarantees \eqref{eq:goal}.

We begin by introducing an abstract formulation of the proposed framework, followed by a detailed procedure for its explicit construction and implementation.
We then demonstrate how the proposed framework can be extended to NARX models with input delays.

\subsection{Abstract formulation}

For a given $\delta>0$, let us define
\begin{equation*}
    \calS_\delta:=\{\zeta\in\bbR^{2n-1} \mid \|\zeta_n\|\le \delta\},
\end{equation*}
where $\zeta_n\in\bbR$ denotes the $n$-th component of the vector $\zeta$.
Since $y(t)$ is the $n$-th component of the augmented state $\zeta(t)$, the objective \eqref{eq:goal} is equivalently reformulated as ensuring
\begin{equation}\label{eq:goalRe}
    \zeta(t)\in\calS_\delta \quad \forall t \ge \kappa
\end{equation}
for some $\kappa\in\bbN$.

Given the dataset $\calD$, we define the \textit{backward reachable set} of an arbitrary set $\calA\subset\bbR^{2n-1}$, as
\begin{align}\label{eq:backReach}
    \calR^{-}(\calA) := \left\{\zeta(t)\in\calZ \mid \exists y_i^+\in\calD_\sfto ~ \mbox{such that}~ \eqref{eq:igp} ~\mbox{with}~ y_{\rmr}(t+1)=y_i^+ ~ \mbox{guarantees} ~ \zeta(t+1)\in \calA \right\}.
\end{align}
This set consists of $\zeta(t)$ from which controller \eqref{eq:igp}, when provided with an appropriate reference point $y_\rmr(t+1)\in\calD_\sfto$, guarantees $\zeta(t+1)\in\calA$.

Building on these definitions, we recursively define a sequence of sets $(\calA_\delta^j)_{j=0}^\infty$ that satisfies
\begin{equation}\label{eq:defProp}
    \begin{split}
        \calA_\delta^0 \subset \calS_\delta, \quad
        \calA_\delta^{j+1} \subset \calR^{-}(\calA_\delta^j) \quad \forall j\in\bbZ_{\ge 0}.   
    \end{split}
\end{equation}
Then, it can be seen that $\zeta(t)\in\calA_\delta^\kappa$ for some $\kappa\in\bbN$ is a sufficient condition under which controller \eqref{eq:igp} with appropriate reference points from the set $\calD_\sfto$ guarantees the output to be bounded by $\delta$ after $\kappa$ steps of control. 
This observation is formalized in the following lemma.

\begin{lem}\label{lem:abskstep}\upshape
    If $\zeta(t)\in\calA_\delta^\kappa$ for some $\delta>0$ and $\kappa\in\bbN$, then there exist
    $y_{i_1}^+, y_{i_2}^+, \ldots, y_{i_\kappa}^+ \in\calD_\sfto$
    such that controller $\eqref{eq:igp}$ with the reference points $y_\rmr(t+l) = y_{i_l}^+$ for $l=1,\ldots,\kappa$ guarantees 
    \begin{equation}\label{eq:kstep}
        \zeta(t+l)\in\calA_\delta^{\kappa-l} \quad \forall l=1,\ldots,\kappa,
    \end{equation}
    which implies $\|y(t+\kappa)\|\le \delta$.
\end{lem}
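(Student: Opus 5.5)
The plan is a finite induction on the step index $l$, using only the nested construction \eqref{eq:defProp} and the definition of the backward reachable set \eqref{eq:backReach}. Set $\calA_\delta^{\kappa-0}=\calA_\delta^\kappa$ as the base case: $\zeta(t)\in\calA_\delta^{\kappa}$ holds by hypothesis.

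For the inductive step, suppose $\zeta(t+l-1)\in\calA_\delta^{\kappa-l+1}$ for some $l\in\{1,\ldots,\kappa\}$. Since $\kappa-l\in\bbZ_{\ge 0}$, \eqref{eq:defProp} gives $\calA_\delta^{\kappa-l+1}\subset\calR^{-}(\calA_\delta^{\kappa-l})$, so $\zeta(t+l-1)\in\calR^{-}(\calA_\delta^{\kappa-l})$. By \eqref{eq:backReach}, this membership means two things. First, $\zeta(t+l-1)\in\calZ$. Second, there is some $y_{i_l}^+\in\calD_\sfto$ such that applying \eqref{eq:igp} at time $t+l-1$ with $y_\rmr(t+l)=y_{i_l}^+$ guarantees $\zeta(t+l)\in\calA_\delta^{\kappa-l}$. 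Choosing this $y_{i_l}^+$ closes the induction. Note that the choice of $y_{i_l}^+$ depends only on the current augmented state $\zeta(t+l-1)$, which the controller knows from past measurements. The references can therefore be chosen sequentially, which is consistent with the statement's "there exist". Running the induction through $l=1,\ldots,\kappa$ gives \eqref{eq:kstep}.

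For the final claim, take $l=\kappa$ in \eqref{eq:kstep}. This gives $\zeta(t+\kappa)\in\calA_\delta^0\subset\calS_\delta$, so the $n$-th component satisfies $\|\zeta_n(t+\kappa)\|\le\delta$. By \eqref{eq:zeta}, the $n$-th component of $\zeta(t+\kappa)$ is $y(t+\kappa)$, so $\|y(t+\kappa)\|\le\delta$.

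The main subtlety is interpretive, not technical. The definition of $\calR^{-}$ is stated in terms of "guarantees", meaning the true closed-loop dynamics \eqref{eq:sys} under \eqref{eq:igp} must produce a successor in the target set. The successor $\zeta(t+l)$ is determined by $\zeta(t+l-1)$, $u(t+l-1)$, and the resulting $y(t+l)$ through the shift structure of the augmented state. So I would state explicitly that "guarantees" refers to this actual successor. I would also note that feasibility ($\zeta\in\calZ$) propagates automatically, since the states come from a real trajectory of the system. With that reading fixed, no estimates (such as Lemma~\ref{lem:errbound} or Assumption~\ref{asm:lipschitz}) are needed at this abstract level.
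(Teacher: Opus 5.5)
Your proposal is correct and matches the paper's reasoning. The paper gives no separate proof of this lemma and treats it as immediate from \eqref{eq:defProp} and \eqref{eq:backReach}; its proof of Theorem~\ref{thm:refk>0} carries out the same step-by-step unfolding of $\calA_\delta^{j+1}\subset\calR^{-}(\calA_\delta^{j})$, ending with $\zeta(t+\kappa)\in\calA_\delta^0\subset\calS_\delta$.
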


Note that if $\calA_\delta^0 \subset \calA_\delta^1$
holds in addition to \eqref{eq:kstep}, then we can again guarantee $\zeta(t+\kappa+1)\in\calA_\delta^0$ according to Lemma~\ref{lem:abskstep}.
Therefore, by applying the same argument repeatedly, we can constrain the output to remain bounded by $\delta$ for an infinite time horizon, thus ensuring \eqref{eq:goal}.
The following lemma captures this observation and establishes a sufficient condition under which controller \eqref{eq:igp} guarantees \eqref{eq:goal}.

\begin{lem}\label{lem:absmain}\upshape
    If $\zeta(0)\in\calA_\delta^\kappa$ and $\calA_\delta^0 \subset \calA_\delta^1$ holds for some $\delta>0$ and $\kappa\in\bbN$ then controller \eqref{eq:igp} with appropriate reference points in $\calD_\sfto$ guarantees \eqref{eq:goal}.
\end{lem}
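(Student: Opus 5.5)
The plan is to chain Lemma~\ref{lem:abskstep} with an induction that exploits $\calA_\delta^0\subset\calA_\delta^1$. There are two phases.

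\emph{Transient phase.} Since $\zeta(0)\in\calA_\delta^\kappa$, Lemma~\ref{lem:abskstep} applied at $t=0$ gives reference points $y_{i_1}^+,\ldots,y_{i_\kappa}^+\in\calD_\sfto$. Using these in controller \eqref{eq:igp} for $t=0,\ldots,\kappa-1$ yields $\zeta(l)\in\calA_\delta^{\kappa-l}$ for $l=1,\ldots,\kappa$. In particular $\zeta(\kappa)\in\calA_\delta^0\subset\calS_\delta$, so $\|y(\kappa)\|\le\delta$.

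\emph{Invariance phase.} I will prove by induction on $s\ge\kappa$ that the references can be chosen so that $\zeta(s)\in\calA_\delta^0$ for all $s\ge\kappa$.
\begin{itemize}
\item The base case $s=\kappa$ is the transient phase.
\item For the inductive step, assume $\zeta(s)\in\calA_\delta^0$. The hypothesis $\calA_\delta^0\subset\calA_\delta^1$ gives $\zeta(s)\in\calA_\delta^1\subset\calR^-(\calA_\delta^0)$ by \eqref{eq:defProp}. The same conclusion also follows from Lemma~\ref{lem:abskstep} with $\kappa=1$ applied at time $s$.
\item By the definition \eqref{eq:backReach} of $\calR^-$, there is some $y_i^+\in\calD_\sfto$ such that \eqref{eq:igp} with $y_\rmr(s+1)=y_i^+$ guarantees $\zeta(s+1)\in\calA_\delta^0$.
\item Each reference is chosen from the current augmented state $\zeta(s)$ only, so these choices concatenate into a single closed-loop reference sequence.
\end{itemize}

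Combining the two phases, $\zeta(t)\in\calA_\delta^0\subset\calS_\delta$ for all $t\ge\kappa$. This is \eqref{eq:goalRe}, which is equivalent to \eqref{eq:goal} because $y(t)$ is the $n$-th component of $\zeta(t)$.

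The only delicate point is making sure the backward reachability guarantee is genuinely closed-loop. Membership $\zeta(t)\in\calR^-(\calA)$ must assert that the \emph{actual} next state, produced by the true dynamics \eqref{eq:sys} under \eqref{eq:igp}, lies in $\calA$. It must also require $\zeta(t)\in\calZ$, so that trajectories remain feasible and the definition can be applied again at the next step. Both properties are built into \eqref{eq:backReach}, so once this is checked the argument is a straightforward induction.
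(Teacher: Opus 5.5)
Your proposal is correct and follows essentially the paper's own argument: reach $\calA_\delta^0$ in $\kappa$ steps via Lemma~\ref{lem:abskstep}, then use $\calA_\delta^0\subset\calA_\delta^1$ together with Lemma~\ref{lem:abskstep} for $\kappa=1$ to return to $\calA_\delta^0\subset\calS_\delta$ at every later step. Your induction simply makes explicit the paper's informal ``applying the same argument repeatedly.''
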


Although selecting $\calA_\delta^0=\calS_\delta$ satisfies the defining property given in \eqref{eq:defProp}, it is highly unlikely for the inclusion $\calA_\delta^0 \subset \calA_\delta^1$ to hold in this case. 
This is because $\calS_\delta$ only constrains the $n$-th component of its element $\zeta\in\bbR^{2n-1}$ to be bounded by $\delta$, thus covering the entire space in the remaining dimensions. 
Meanwhile, choosing a smaller $\calA_\delta^0$ may cause the sets $\calA_\delta^j$ for $j\ge 1$ to shrink, making it less likely for the condition $\zeta(0)\in\calA_\delta^\kappa$ to be satisfied for some $\kappa\in\bbN$.
Hence, the set $\calA_\delta^0$ must be carefully designed considering this trade-off.

\subsection{Explicit implementation}

In what follows, we exploit the error bound of KI to explicitly construct $(\calA_\delta^j)_{j=0}^\infty$, and verify that they satisfy the defining property given in \eqref{eq:defProp}. 
Then, we provide a method to actively select appropriate reference points from the set $\calD_\sfto$, and a practical guideline to implement the proposed controller.

Consider controller \eqref{eq:igp} with the reference point chosen as $y_\rmr(t+1)=y_i^+$ for some data point $([y_i^+;\zeta_i],u_i)\in \calD$. 
If $\zeta(t)\approx \zeta_i$, it is expected that 
\begin{align*}
    u(t)=\hat{c}([y_i^+;\zeta(t)])&\approx \hat{c}([y_i^+;\zeta_i])= u_i, \\
    y(t+1)=f(\zeta(t), u(t))&\approx f(\zeta_i, u_i) =y_i^+,
\end{align*}
where $\hat{c}([y_i^+;\zeta_i])= u_i$ follows from \eqref{eq:chatCF}.
The following proposition shows that $\|u_i-u(t)\|$ and $\|y_i^+ - y(t+1)\|$ can be upper bounded by class $\calK_\infty$ functions of $\|\zeta_i - \zeta(t) \|$, which aligns with this intuition.

\begin{prop}\label{prop:bound}\upshape
    There exist known class $\mathcal{K}_\infty$ functions $\gamma_u: \bbR_{\ge 0} \ra \bbR_{\ge 0}$ and $\gamma_y: \bbR_{\ge 0} \ra \bbR_{\ge 0}$ such that for any $ ([y_i^+;\zeta_i],u_i)\in \calD$, 
    \begin{align}
        \left \| u_i - \hat{c}([y_i^+;\zeta(t)]) \right \| &\le \gamma_u(\|\zeta_i-\zeta(t) \|),  \label{eq:inputBound} \\
         \left \| y_i^+ - f(\zeta(t), \hat{c}([y_i^+;\zeta(t)])) \right \| &\le \gamma_y(\|\zeta_i-\zeta(t) \|) \label{eq:outputBound}
    \end{align}
    for all $\zeta(t) \in \calZ$.
\end{prop}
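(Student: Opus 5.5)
We prove \eqref{eq:inputBound} first and then obtain \eqref{eq:outputBound} from it, using only Assumption~\ref{asm:lipschitz}, Lemma~\ref{lem:errbound}, and the interpolation property $\hat{c}(\xi_i)=u_i$. Fix a data point $([y_i^+;\zeta_i],u_i)\in\calD$ and $\zeta(t)\in\calZ$, and set $\xi:=[y_i^+;\zeta(t)]$. Then $\|\xi_i-\xi\|=\|\zeta_i-\zeta(t)\|=:r$. Also $\xi_i\in\calD_\sfin$, so $\epsilon(\xi)\le r$.

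\textbf{Input bound.} Since $u_i=c(\xi_i)$ (as the data are generated by the system) and $\hat{c}(\xi_i)=u_i$, we split
\begin{equation*}
\|u_i-\hat{c}(\xi)\| \le \|c(\xi_i)-c(\xi)\| + \|c(\xi)-\hat{c}(\xi)\| \le L_c r + \eta(\epsilon(\xi)) \le L_c r + \eta(r).
\end{equation*}
The first term uses the Lipschitz constant of $c$ from Assumption~\ref{asm:lipschitz}. The second uses Lemma~\ref{lem:errbound} together with the monotonicity of the class $\calK$ function $\eta$ and $\epsilon(\xi)\le r$. We therefore set $\gamma_u(r):=L_c r+\eta(r)$. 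This function is continuous, zero at zero, strictly increasing, and unbounded because of the linear term, so it is class $\calK_\infty$ even though $\eta$ is only class $\calK$. It is known because $L_c$ and $\eta$ are known.

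\textbf{Output bound.} We use $y_i^+=f(\zeta_i,u_i)$ and the Lipschitz property of $f$:
\begin{equation*}
\|y_i^+-f(\zeta(t),\hat{c}(\xi))\| = \|f(\zeta_i,u_i)-f(\zeta(t),\hat{c}(\xi))\| \le L_f\sqrt{r^2+\|u_i-\hat{c}(\xi)\|^2} \le L_f\big(r+\gamma_u(r)\big).
\end{equation*}
The last step uses $\sqrt{a^2+b^2}\le a+b$ for $a,b\ge 0$ together with the input bound. We therefore set $\gamma_y(r):=L_f(r+\gamma_u(r))$, which is again known and class $\calK_\infty$.

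\textbf{Main point of care.} Two identities carry the argument. The first is $u_i=c(\xi_i)$. This holds because $\zeta_i\in\calZ$ and $y_i^+\in\calR(\zeta_i)$ are realized by the recorded trajectory, and by Assumption~\ref{asm:inverse} the input producing $y_i^+$ from $\zeta_i$ is unique and equals $c(\xi_i)$. The second is $y_i^+=f(\zeta_i,u_i)$, which holds directly by construction of the dataset. The Lipschitz bound on $c$ is global on $\bbR^{2n}$, so the argument does not need $y_i^+\in\calR(\zeta(t))$. This is important, because that reachability condition cannot be verified from data.
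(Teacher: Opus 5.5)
Your proof is correct and follows the paper's argument, producing exactly the same functions $\gamma_u(r)=L_c r+\eta(r)$ and $\gamma_y(r)=L_f\bigl(r+\gamma_u(r)\bigr)$. The only cosmetic difference is in the output bound: you apply the joint Lipschitz estimate on $f$ once together with $\sqrt{a^2+b^2}\le a+b$, while the paper inserts the intermediate value $f(\zeta(t),u_i)$ and uses the triangle inequality. Incidentally, neither argument uses the interpolation property $\hat{c}(\xi_i)=u_i$ that you list at the outset.
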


\begin{proof}
    Let us define $\epsilon(\cdot)$ as in Lemma~\ref{lem:errbound}. 
    Then, it holds that 
    \begin{equation}\label{eq:epsbound}
        \epsilon([y_i^+;\zeta(t)])\le \| [y_i^+;\zeta_i] -[y_i^+;\zeta(t)] \| = \| \zeta_i-\zeta(t) \|.
    \end{equation}
    Therefore,
    \begin{align}\label{eq:gammauProof}
        \left \| u_i - \hat{c}([y_i^+;\zeta(t)]) \right \| &= \left \| c([y_i^+;\zeta_i]) - \hat{c}([y_i^+;\zeta(t)]) \right \| \nonumber\\ 
            &\le \left \| c([y_i^+;\zeta_i]) - c([y_i^+;\zeta(t)]) \right \| + \left \| c([y_i^+;\zeta(t)]) - \hat{c}([y_i^+;\zeta(t)]) \right \| \nonumber\\
            &\le L_c\cdot \left\|\zeta_i-\zeta(t) \right\| + \eta(\|\zeta_i-\zeta(t) \|) \nonumber\\
            &=: \gamma_u(\|\zeta_i-\zeta(t) \|),
    \end{align} 
    where the first inequality follows from the triangle inequality, and the second inequality follows from Assumption~\ref{asm:lipschitz} and Lemma~\ref{lem:errbound} with \eqref{eq:epsbound}.
    Clearly, $\gamma_u(\cdot)$ is a class $\calK_\infty$ function.    
    
    Since $y_i^+=f(\zeta_i,u_i)$, it can be similarly derived that
    \begin{equation*}
        \begin{split}
            \left \| y_i^+ - f(\zeta(t), \hat{c}([y_i^+;\zeta(t)])) \right \| 
            &= \left \| f(\zeta_i,u_i) - f(\zeta(t), \hat{c}([y_i^+;\zeta(t)])) \right \| \\
            &\le \left \| f(\zeta_i,u_i) - f(\zeta(t), u_i) \right \|  + \left \| f(\zeta(t),u_i) - f(\zeta(t), \hat{c}([y_i^+;\zeta(t)])) \right \| \\
            &\le L_f \cdot \left( \left\|\zeta_i-\zeta(t) \right\| + \gamma_u(\left\|\zeta_i-\zeta(t) \right\|) \right) \\
            &=: \gamma_y(\left\|\zeta_i-\zeta(t) \right\|),
        \end{split}
    \end{equation*}
    where $\gamma_y(\cdot)$ is of class $\calK_\infty$ because $\calK_\infty$ functions are closed under addition and positive scalar multiplication.
    This concludes the proof.
\end{proof}

Next, we extend the result of Proposition~\ref{prop:bound} to establish an upper bound on the difference between the corresponding augmented states.
For an augmented state $\zeta(t)$, we define the following vectors that extract all but the oldest input and output:
\begin{align}\label{eq:zetaRe}
    \bfy(t)&:=
    \begin{bmatrix}
        \bfzero_{(n-1)\times 1} &~~ I_{n-1} &~~ \bfzero_{(n-1) \times (n-1)}
    \end{bmatrix}
    \zeta(t)\in\bbR^{n-1},  \nonumber\\
    \bfu(t)&:=
    \begin{bmatrix}
        \bfzero_{(n-2)\times (n+1)} &~~ I_{n-2} 
    \end{bmatrix}
    \zeta(t)\in\bbR^{n-2}.
\end{align}
Then, it follows from \eqref{eq:zeta} that 
\begin{equation}\label{eq:zeta(t+1)}
    \zeta(t+1) =
        \begin{bmatrix}
            \bfy(t)\\ y(t+1) \\ \bfu(t) \\ u(t)
        \end{bmatrix}. 
\end{equation}
Analogously to \eqref{eq:zetaRe}, for each $([y_i^+;\zeta_i],u_i)\in \calD$, we define
\begin{align}\label{eq:zetaiRe}
    \bfy_i&:=
    \begin{bmatrix}
        \bfzero_{(n-1)\times 1} &~~ I_{n-1} &~~ \bfzero_{(n-1) \times (n-1)}
    \end{bmatrix}
    \zeta_i\in\bbR^{n-1}, \nonumber \\
    \bfu_i&:=
    \begin{bmatrix}
        \bfzero_{(n-2)\times (n+1)} &~~ I_{n-2} 
    \end{bmatrix}
    \zeta_i\in\bbR^{n-2},
\end{align}
and $\zeta_i^+\in\bbR^{2n-1}$ that corresponds to $\zeta(t+1)$ in \eqref{eq:zeta(t+1)}:
\begin{equation}\label{eq:zeta+}
    \zeta_i^+ := \begin{bmatrix}
         \bfy_i\\ y_i^+ \\ \bfu_i \\ u_i
    \end{bmatrix}.
\end{equation}

The following proposition states that $\|\zeta_i^+ - \zeta(t+1) \|$ is also bounded above by a class $\calK_\infty$ function of $\|\zeta_i-\zeta(t) \|$.

\begin{prop}\label{prop:zetabound}\upshape
    For the class $\calK_\infty$ function $\gamma:\bbR_{\ge 0}\ra\bbR_{\ge 0}$, defined by
    \begin{equation}\label{eq:gammaDef}
        \gamma(\epsilon) := \gamma_u(\epsilon) + \gamma_y(\epsilon) + \epsilon, 
    \end{equation}
     controller \eqref{eq:igp} with the reference point $y_\rmr(t+1)=y_i^+$ guarantees
    \begin{align}\label{eq:zetaboundToShow}
        \| \zeta_i^+ - \zeta(t+1) \| \le \gamma(\|\zeta_i - \zeta(t) \|)
    \end{align}
    for any $([y_i^+;\zeta_i],u_i)\in \calD$ and $\zeta(t)\in\calZ$.
\end{prop}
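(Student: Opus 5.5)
The plan is to compare $\zeta_i^+$ in \eqref{eq:zeta+} with $\zeta(t+1)$ in \eqref{eq:zeta(t+1)} block by block. We split the difference into the part inherited from the shifted state and the two freshly generated entries, and then apply the triangle inequality. Under controller \eqref{eq:igp} with $y_\rmr(t+1)=y_i^+$ we have $u(t)=\hat{c}([y_i^+;\zeta(t)])$ and $y(t+1)=f(\zeta(t),\hat{c}([y_i^+;\zeta(t)]))$. Hence
\begin{equation*}
    \zeta_i^+-\zeta(t+1) = \begin{bmatrix} \bfy_i-\bfy(t) \\ 0 \\ \bfu_i-\bfu(t) \\ 0 \end{bmatrix} + \begin{bmatrix} \bfzero_{(n-1)\times 1} \\ y_i^+-y(t+1) \\ \bfzero_{(n-2)\times 1} \\ 0 \end{bmatrix} + \begin{bmatrix} \bfzero_{(2n-2)\times 1} \\ u_i-u(t) \end{bmatrix}.
\end{equation*}

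The first term is a coordinate selection of $\zeta_i-\zeta(t)$. By \eqref{eq:zetaRe} and \eqref{eq:zetaiRe}, $\bfy$ and $\bfu$ are obtained from $\zeta$ by the same matrices, which drop the oldest output and the oldest input. The stacked selection matrix therefore has orthonormal rows, so its induced norm is at most $1$. This gives $\|[\bfy_i-\bfy(t);\bfu_i-\bfu(t)]\|\le\|\zeta_i-\zeta(t)\|$, and padding with zeros does not change the Euclidean norm. The second term has norm $\|y_i^+-f(\zeta(t),\hat{c}([y_i^+;\zeta(t)]))\|$, which is at most $\gamma_y(\|\zeta_i-\zeta(t)\|)$ by \eqref{eq:outputBound} in Proposition~\ref{prop:bound}. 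The third term is bounded by $\gamma_u(\|\zeta_i-\zeta(t)\|)$ by \eqref{eq:inputBound}. Both bounds apply because $\zeta(t)\in\calZ$.

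Summing the three bounds via the triangle inequality yields $\|\zeta_i^+-\zeta(t+1)\|\le \epsilon+\gamma_y(\epsilon)+\gamma_u(\epsilon)=\gamma(\epsilon)$ with $\epsilon=\|\zeta_i-\zeta(t)\|$, which is \eqref{eq:zetaboundToShow}. The function $\gamma$ is of class $\calK_\infty$ because sums of $\calK_\infty$ functions and the identity are again of class $\calK_\infty$.

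There is no substantial obstacle here. The only point needing care is the bookkeeping of \eqref{eq:zetaRe}, namely checking that the extracted subvectors of $\zeta_i$ and $\zeta(t)$ line up entrywise in the same positions, so that their difference is a sub-vector of $\zeta_i-\zeta(t)$. The degenerate case $n=1$, where $\bfy$ and $\bfu$ are empty, is covered trivially by the same argument.
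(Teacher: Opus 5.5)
Your proof is correct and follows the paper's argument. Both use the same triangle-inequality split of $\zeta_i^+-\zeta(t+1)$ into the shifted block $[\bfy_i-\bfy(t);\bfu_i-\bfu(t)]$, the new output entry bounded by $\gamma_y$, and the new input entry bounded by $\gamma_u$, with the last two coming from Proposition~\ref{prop:bound}. You also spell out the step the paper calls straightforward, namely that the shifted block is a coordinate projection of $\zeta_i-\zeta(t)$ and therefore has norm at most $\|\zeta_i-\zeta(t)\|$. Your $n=1$ aside is slightly imprecise, since then $\zeta(t+1)=y(t+1)$ has no $u(t)$ slot at all, but the bound still holds a fortiori.
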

\begin{proof}
    Applying the triangle inequality, we have 
        \begin{equation*}
            \| \zeta_i^+ - \zeta(t+1) \| \le \| y_i^+ - y(t+1) \| + \| u_i - u(t)\| + \left\| \begin{bmatrix}
                \bfy(t) - \bfy_i \\
                \bfu(t) - \bfu_i
            \end{bmatrix} \right\|. 
        \end{equation*}
        Then, it is straightforward from Proposition~\ref{prop:bound} that \eqref{eq:zetaboundToShow} holds, which concludes the proof.
\end{proof}

For a compact set $\calA\subset\bbR^{2n-1}$, let us define
\begin{align*}\label{eq:k=1cond}
    \calI(\calA):= \left\{ (i, r_i) \in \{1, \dots, N\} \times \bbR_{\ge 0} \mid 
    ([y_i^+; \zeta_i],\, u_i) \in \calD,~\zeta_i^+ \in \calA, ~
    r_i := \max\{ r \ge 0 \mid \calB(\zeta_i^+, r) \subset \calA \} \right\},
\end{align*}
where $r_i$ is well-defined since $\calA$ is compact.
Then, for each $(i,r_i)\in\calI(\calA)$, it follows from Proposition~\ref{prop:zetabound} that if
\begin{equation}\label{eq:zetaInBall}
    \zeta(t) \in \calB(\zeta_i,\gamma^{-1}(r_i))
\end{equation}
then controller \eqref{eq:igp} with the reference point $y_\rmr(t+1) = y_i^+$ guarantees $\|\zeta_i^+ - \zeta(t+1)\|\le r_i$, which implies
\begin{align}\label{eq:calIProp}
    \zeta(t+1) \in \calB(\zeta_i^+, r_i)\subset \calA.
\end{align}
This observation leads to the following lemma, which provides a method to obtain an underestimate of $\calR^{-1}(\calA)$ defined in \eqref{eq:backReach} from $\calD$.

\begin{lem}\label{lem:backward}\upshape
    Given the dataset $\calD$ and a compact set $\calA$,
    \begin{equation}\label{eq:rHat}
        \hat{\calR}^-(\calA) := \bigcup_{(i,r_i) \in \calI(\calA)} \calB(\zeta_i, \gamma^{-1}(r_i))
    \end{equation}
    is an underestimate of $\calR^-(\calA)$, i.e., $\hat{\calR}^-(\calA)\subset \calR^-(\calA)$.
\end{lem}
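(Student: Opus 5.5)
The plan is to prove the inclusion pointwise. I will take an arbitrary $\zeta(t)\in\hat{\calR}^-(\calA)$ and exhibit a reference point in $\calD_\sfto$ that satisfies the defining condition of $\calR^-(\calA)$ in \eqref{eq:backReach}.

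By the definition \eqref{eq:rHat} of the union, there is some pair $(i,r_i)\in\calI(\calA)$ with $\zeta(t)\in\calB(\zeta_i,\gamma^{-1}(r_i))$. This gives
\begin{equation*}
\|\zeta_i-\zeta(t)\|\le\gamma^{-1}(r_i).
\end{equation*}
Since $\gamma$ in \eqref{eq:gammaDef} is of class $\calK_\infty$, it is strictly increasing and bijective on $\bbR_{\ge 0}$. Hence $\gamma^{-1}$ is well defined, including at $r_i=0$ where $\gamma^{-1}(0)=0$, and applying $\gamma$ preserves the inequality: $\gamma(\|\zeta_i-\zeta(t)\|)\le r_i$.

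I then choose the reference $y_\rmr(t+1)=y_i^+\in\calD_\sfto$ and invoke Proposition~\ref{prop:zetabound}. The resulting augmented state satisfies
\begin{equation*}
\|\zeta_i^+-\zeta(t+1)\|\le\gamma(\|\zeta_i-\zeta(t)\|)\le r_i,
\end{equation*}
so $\zeta(t+1)\in\calB(\zeta_i^+,r_i)$. By the definition of $r_i$ in $\calI(\calA)$, we have $\calB(\zeta_i^+,r_i)\subset\calA$, and therefore $\zeta(t+1)\in\calA$, exactly as in \eqref{eq:calIProp}. This is the guarantee required in \eqref{eq:backReach}, so $\zeta(t)\in\calR^-(\calA)$.

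The main subtlety is a membership issue. Proposition~\ref{prop:zetabound} is stated only for $\zeta(t)\in\calZ$, and $\calR^-(\calA)$ is by definition a subset of $\calZ$, but a point of a ball $\calB(\zeta_i,\gamma^{-1}(r_i))$ need not be feasible. I will handle this by reading $\hat{\calR}^-(\calA)$ as relevant only along closed-loop trajectories, so that $\zeta(t)$ is automatically in $\calZ$. Equivalently, I will prove $\hat{\calR}^-(\calA)\cap\calZ\subset\calR^-(\calA)$ and note that $\zeta(t)$ is always feasible, since it is generated by the system \eqref{eq:sys}. A smaller point is that the maximum defining $r_i$ must exist. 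Compactness of $\calA$, hence closedness, gives this: the set of admissible radii is closed and bounded.
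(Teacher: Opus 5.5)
Your argument is correct and is essentially the paper's own: the paper derives the lemma from the observation just before it. That observation is that $\zeta(t)\in\calB(\zeta_i,\gamma^{-1}(r_i))$, combined with Proposition~\ref{prop:zetabound} under the reference $y_\rmr(t+1)=y_i^+$, gives $\zeta(t+1)\in\calB(\zeta_i^+,r_i)\subset\calA$ as in \eqref{eq:calIProp}. Your remark that the inclusion should strictly be read as $\hat{\calR}^-(\calA)\cap\calZ\subset\calR^-(\calA)$ (since $\calR^-(\calA)\subset\calZ$ by definition) is a fair tightening of a point the paper leaves implicit.
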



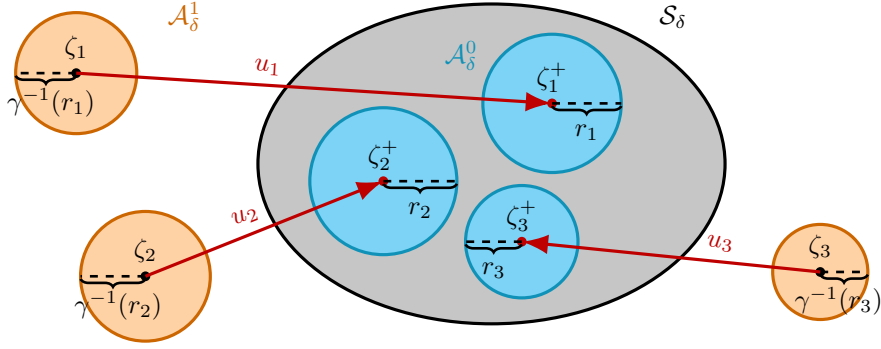
\begin{figure}[t]
\centering
\begin{tikzpicture}[scale=1.15]
\definecolor{myred}{rgb}{0.75,0.0,0.0}

\draw[black, very thick, fill=gray!60, fill opacity=0.75]
  (0,0) ellipse (2.70 and 1.85);
\node[black] at (2.10,1.70) {$\calS_\delta$};

\coordinate (zp1) at (0.70,0.7);
\coordinate (zp2) at (-1.25,-0.20);
\coordinate (zp3) at (0.35,-0.90);

\def\rpOne{0.80}
\def\rpTwo{0.85}
\def\rpThree{0.65}

\filldraw[fill=cyan!45, draw=cyan!70!black, very thick] (zp1) circle (\rpOne);
\filldraw[fill=cyan!45, draw=cyan!70!black, very thick] (zp2) circle (\rpTwo);
\filldraw[fill=cyan!45, draw=cyan!70!black, very thick] (zp3) circle (\rpThree);

\node[cyan!70!black] at (-0.35,1.25) {$\calA_\delta^{0}$};

\fill[myred] (zp1) circle (1.6pt);
\fill[myred] (zp2) circle (1.6pt);
\fill[myred] (zp3) circle (1.6pt);

\node[black] at ($(zp1)+(0,0.30)$) {$\zeta_{1}^{+}$};
\node[black] at ($(zp2)+(0,0.3)$) {$\zeta_{2}^{+}$};
\node[black] at ($(zp3)+(0,0.3)$) {$\zeta_{3}^{+}$};

\draw[black, dashed, line width=1pt] (zp1) -- ($(zp1)+(\rpOne,0)$);
\draw[decorate, decoration={brace, amplitude=3pt, raise=1pt, mirror}, line width=1pt, black]
  (zp1) -- ($(zp1)+(\rpOne,0)$)
  node[midway, yshift=-11pt] {$r_1$};

\draw[black, dashed, line width=1pt] (zp2) -- ($(zp2)+(\rpTwo,0)$);
\draw[decorate, decoration={brace, amplitude=3pt, raise=1pt, mirror}, line width=1pt, black]
  (zp2) -- ($(zp2)+(\rpTwo,0)$)
  node[midway, yshift=-11pt] {$r_2$};
  
\draw[black, dashed, line width=1pt] (zp3) -- ($(zp3)+(-\rpThree,0)$);
\draw[decorate, decoration={brace, amplitude=3pt, raise=1pt}, line width=1pt, black]
  (zp3) -- ($(zp3)+(-\rpThree,0)$)
  node[midway, yshift=-11pt] {$r_3$};
  
\coordinate (z1) at (-4.80,1.05);
\coordinate (z2) at (-4.00,-1.30);
\coordinate (z3) at (3.80,-1.25);

\def\rOne{0.70}
\def\rTwo{0.75}
\def\rThree{0.55}

\filldraw[fill=orange!35, draw=orange!80!black, very thick] (z1) circle (\rOne);
\filldraw[fill=orange!35, draw=orange!80!black, very thick] (z2) circle (\rTwo);
\filldraw[fill=orange!35, draw=orange!80!black, very thick] (z3) circle (\rThree);

\node[orange!80!black] at (-3.55,1.7) {$\calA_\delta^{1}$};

\fill[black] (z1) circle (1.6pt);
\fill[black] (z2) circle (1.6pt);
\fill[black] (z3) circle (1.6pt);

\node[black] at ($(z1)+(0,0.3)$) {$\zeta_{1}$};
\node[black] at ($(z2)+(0,0.3)$) {$\zeta_{2}$};
\node[black] at ($(z3)+(0,0.3)$) {$\zeta_{3}$};

\draw[black, dashed, line width=1pt] (z1) -- ($(z1)+(-\rOne,0)$);
\draw[decorate, decoration={brace, amplitude=3pt, raise=1pt}, line width=1pt, black]
  (z1) -- ($(z1)+(-\rOne,0)$)
  node[midway, xshift=2pt, yshift=-11pt] {$\gamma^{-1}(r_1)$};

\draw[black, dashed, line width=1pt] (z2) -- ($(z2)+(-\rTwo,0)$);
\draw[decorate, decoration={brace, amplitude=3pt, raise=1pt}, line width=1pt, black]
  (z2) -- ($(z2)+(-\rTwo,0)$)
  node[midway, xshift=2pt, yshift=-11pt] {$\gamma^{-1}(r_2)$};

\draw[black, dashed, line width=1pt] (z3) -- ($(z3)+(\rThree,0)$);
\draw[decorate, decoration={brace, amplitude=3pt, raise=1pt, mirror}, line width=1pt, black]
  (z3) -- ($(z3)+(\rThree,0)$)
  node[midway, xshift=-2pt, yshift=-11pt] {$\gamma^{-1}(r_3)$};

\draw[-{Latex[length=3.5mm]}, line width=1.2pt, myred]
  (z1) -- (zp1) node[midway, above, sloped, xshift=-18pt] {$u_1$};

\draw[-{Latex[length=3.5mm]}, line width=1.2pt, myred]
  (z2) -- (zp2) node[midway, above, sloped,xshift=-5pt] {$u_2$};

\draw[-{Latex[length=3.5mm]}, line width=1.2pt, myred]
  (z3) -- (zp3) node[midway, above, sloped, xshift=18pt] {$u_3$};

\end{tikzpicture}
\caption{Abstract illustration of $\calS_\delta$ (gray area), $\calA_\delta^{0}$ (cyan area), and $\calA_\delta^{1}$ (orange area). 
For each $i\in\{1,2,3\}$, $\zeta_i^+$ is given by \eqref{eq:zeta+} for $([y_i^+;\zeta_i],u_i)\in\calD$ and $(i,r_i)\in\calI(\calS_\delta)$.}
\label{fig:diagram}
\end{figure}

Using \eqref{eq:calIProp} and the result of Lemma~\ref{lem:backward}, the following proposition explicitly constructs the sequence of sets $(\calA_\delta^j)_{j=0}^\infty$ satisfying the defining property given in \eqref{eq:defProp}.

\begin{prop}\label{prop:calA}\upshape
    Given the dataset $\calD$ and $\delta>0$, let the sequence $(\calA_\delta^j)_{j=0}^\infty$ be recursively constructed as
    \begin{equation}\label{eq:expAdef}
        \begin{split}
            \calA_\delta^0 &= \bigcup_{(i,r_i) \in  \calI(\calS_\delta)}\calB(\zeta_i^+, r_i),  ~~~~~
            \calA_\delta^{j+1}=\hat{\calR}^-(\calA_\delta^j)~~~\forall j\in\bbZ_{\ge 0}.
        \end{split}
    \end{equation}
    Then, \eqref{eq:expAdef} is consistent with the defining property \eqref{eq:defProp}.
\end{prop}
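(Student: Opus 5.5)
We need to check two things: $\calA_\delta^0\subset\calS_\delta$, and $\calA_\delta^{j+1}\subset\calR^-(\calA_\delta^j)$ for every $j\in\bbZ_{\ge 0}$. The plan is to handle the base set directly from the definition of $\calI(\cdot)$. The recursive inclusion then follows by induction from Lemma~\ref{lem:backward}, once we confirm that each $\calA_\delta^j$ is compact so that the lemma applies.

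\textbf{Base set.} Take any $(i,r_i)\in\calI(\calS_\delta)$. By definition of $\calI$, the radius $r_i$ is chosen so that $\calB(\zeta_i^+,r_i)\subset\calS_\delta$. Taking the union over all such pairs gives $\calA_\delta^0\subset\calS_\delta$.

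There is one technical wrinkle here. $\calI$ was defined only for compact sets, but $\calS_\delta$ is closed and unbounded. I would remark that the quantity $\max\{r\ge0\mid\calB(\zeta_i^+,r)\subset\calS_\delta\}$ is still well defined. Since $\calS_\delta$ constrains only the $n$-th coordinate, this maximum equals $\delta-\|(\zeta_i^+)_n\|$ whenever $\zeta_i^+\in\calS_\delta$. So the definition of $\calI(\calS_\delta)$ makes sense and yields a finite radius.

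\textbf{Compactness.} Next I would show that every $\calA_\delta^j$ is compact, using induction on $j$.
\begin{itemize}
\item $\calA_\delta^0$ is a finite union of closed balls, because $\calD$ is finite. It is therefore compact. If $\calI(\calS_\delta)$ is empty, $\calA_\delta^0$ is the empty set, which is also compact.
\item Suppose $\calA_\delta^j$ is compact. Then $\calI(\calA_\delta^j)$ is well defined and contains at most $N$ elements.
\item Hence $\hat{\calR}^-(\calA_\delta^j)$ is a finite union of closed balls $\calB(\zeta_i,\gamma^{-1}(r_i))$. These are well defined because $\gamma$ is class $\calK_\infty$, so $\gamma^{-1}$ exists on $\bbR_{\ge0}$. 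The union is compact, which gives compactness of $\calA_\delta^{j+1}$.
\end{itemize}

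\textbf{Recursive inclusion.} With compactness established, Lemma~\ref{lem:backward} applied to $\calA=\calA_\delta^j$ gives $\calA_\delta^{j+1}=\hat{\calR}^-(\calA_\delta^j)\subset\calR^-(\calA_\delta^j)$ for all $j$. This completes \eqref{eq:defProp}.

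If one wants the argument self-contained rather than citing Lemma~\ref{lem:backward}, the content of that lemma is exactly \eqref{eq:zetaInBall}--\eqref{eq:calIProp}. Take $\zeta(t)\in\calB(\zeta_i,\gamma^{-1}(r_i))$ with $\zeta(t)\in\calZ$, and choose the reference $y_\rmr(t+1)=y_i^+$. Proposition~\ref{prop:zetabound} then gives $\|\zeta_i^+-\zeta(t+1)\|\le r_i$, so $\zeta(t+1)\in\calA_\delta^j$.

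\textbf{Main obstacle.} The only delicate point is the interplay with $\calZ$. $\calR^-$ is defined as a subset of $\calZ$, whereas $\hat{\calR}^-$ consists of full balls that may leave $\calZ$. I expect to resolve this by interpreting the inclusion on $\calZ$, i.e.\ $\hat{\calR}^-(\calA)\cap\calZ\subset\calR^-(\calA)$. This is the form in which Lemma~\ref{lem:backward} is meant to be used, since only feasible augmented states arise along trajectories. Apart from this and the well-definedness of $\calI(\calS_\delta)$, the proof is routine bookkeeping.
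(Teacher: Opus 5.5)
Your proposal is correct and is the paper's argument: $\calA_\delta^0\subset\calS_\delta$ follows directly from the definition of $r_i$, and $\calA_\delta^{j+1}\subset\calR^-(\calA_\delta^j)$ follows by applying Lemma~\ref{lem:backward} with $\calA=\calA_\delta^j$. The paper leaves three points implicit, and you handle each of them correctly: that $r_i$ is well defined for the unbounded set $\calS_\delta$; that every $\calA_\delta^j$ is compact (by induction), so the lemma applies; and that the inclusion must be read relative to $\calZ$, since $\calR^-(\cdot)\subset\calZ$ while the balls in $\hat{\calR}^-(\cdot)$ need not lie in $\calZ$.
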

\begin{proof}
    It follows from the definition of $r_i$ that $\calB(\zeta_i^+,r_i)\subset \calS_\delta$ for each $(i,r_i)\in\calI(\calS_\delta)$, which implies $\calA_\delta^0 \subset \calS_\delta$.
    Also, it directly follows from Lemma~\ref{lem:backward} that $\calA_\delta^{j+1}=\hat{\calR}^-(\calA_\delta^j)\subset\calR^{-}(\calA_\delta^j)$ for all $j\in\bbZ_{\ge 0}$, and this concludes the proof.
\end{proof}

Figure~\ref{fig:diagram} provides an abstract illustration of $\calS_\delta$, and $\calA_\delta^{0}$ and $\calA_\delta^{1}$ as defined in \eqref{eq:expAdef}.
Note that \eqref{eq:expAdef} can be entirely constructed from the dataset $\calD$, since the function $\gamma$ in Proposition~\ref{prop:zetabound} is known under Assumption~\ref{asm:RKHSnorm}.
This allows one to collect $\calD$ and pre-compute \eqref{eq:expAdef} during the offline procedure.
The subsequent theorem provides a method to choose an appropriate sequence of reference points $y_{i_1}^+, y_{i_2}^+, \ldots, y_{i_\kappa}^+ \in\calD_\sfto$ described in Lemma~\ref{lem:abskstep}, given \eqref{eq:expAdef}.
\begin{thm}\label{thm:refk>0}\upshape
    Given the dataset $\calD$ and $\delta>0$, consider the sequence of sets $(\calA_\delta^j)_{j=0}^\infty$ given by \eqref{eq:expAdef}.
    If $\zeta(t)\in\calA_\delta^\kappa$ for some $\kappa\in\bbN$, then there exist 
    $(i_l,r_{i_l})\in \calI(\calA_\delta^{\kappa-l})$ for $l=1,\ldots, \kappa$
    such that 
    \begin{align}
        \zeta(t) &\in \calB(\zeta_{i_1},\gamma^{-1}(r_{i_1})),\label{eq:cor2ToShow1} \\
        \calB(\zeta_{i_l}^+, r_{i_l})  &\subset \calB(\zeta_{i_{l+1}},\gamma^{-1}(r_{i_{l+1}})) ~~~ \forall l=1,\ldots, \kappa-1. \label{eq:cor2ToShow2}
    \end{align}
    Then, controller $\eqref{eq:igp}$ with the reference points $y_\rmr(t+l) = y_{i_l}^+$ for $l=1,\ldots,\kappa$ guarantees 
    \begin{equation}\label{eq:expkstep}
        \zeta(t+l)\in\calA_\delta^{\kappa-l} \quad \forall l=1,\ldots,\kappa,
    \end{equation}
    which implies $\|y(t+\kappa)\|\le \delta$.
\end{thm}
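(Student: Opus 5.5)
The plan is a forward induction on $l$ that picks the indices $i_1,\ldots,i_\kappa$ once, at time $t$, from the sets \eqref{eq:expAdef}. The closed-loop claim \eqref{eq:expkstep} then follows from \eqref{eq:calIProp} applied step by step.

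\emph{Base step.} By \eqref{eq:expAdef} and \eqref{eq:rHat}, $\calA_\delta^\kappa=\hat{\calR}^-(\calA_\delta^{\kappa-1})=\bigcup_{(i,r_i)\in\calI(\calA_\delta^{\kappa-1})}\calB(\zeta_i,\gamma^{-1}(r_i))$. Hence $\zeta(t)\in\calA_\delta^\kappa$ yields some $(i_1,r_{i_1})\in\calI(\calA_\delta^{\kappa-1})$ with $\zeta(t)\in\calB(\zeta_{i_1},\gamma^{-1}(r_{i_1}))$, which is \eqref{eq:cor2ToShow1}.

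\emph{Inductive step.} Suppose $(i_l,r_{i_l})\in\calI(\calA_\delta^{\kappa-l})$ has been fixed with $1\le l\le\kappa-1$. By the definition of $r_{i_l}$ in $\calI(\cdot)$, the ball $\calB(\zeta_{i_l}^+,r_{i_l})$ is the largest ball centred at $\zeta_{i_l}^+$ inside $\calA_\delta^{\kappa-l}$. Since $l\le\kappa-1$, this set equals $\bigcup_{(j,r_j)\in\calI(\calA_\delta^{\kappa-l-1})}\calB(\zeta_j,\gamma^{-1}(r_j))$. I then need one member $(i_{l+1},r_{i_{l+1}})$ of this union whose ball contains all of $\calB(\zeta_{i_l}^+,r_{i_l})$; that is \eqref{eq:cor2ToShow2}.

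\emph{Main obstacle.} This selection is the crux. A ball contained in a finite union of balls need not lie in any single one of them, so the inclusion does not follow from the definitions alone. My first attempt is as follows.
\begin{itemize}
\item Among the finitely many members of the union, take the one containing the centre $\zeta_{i_l}^+$ with the largest margin $\gamma^{-1}(r_j)-\|\zeta_j-\zeta_{i_l}^+\|$.
\item Show this margin is at least $r_{i_l}$, using the maximality of $r_{i_l}$: if every member's margin were smaller than $r_{i_l}$, try to produce a point of $\calB(\zeta_{i_l}^+,r_{i_l})$ outside the union.
\end{itemize}
I am not confident this succeeds when several balls overlap, since the union can cover the ball without any single member doing so. If it fails, the fallback I would commit to is to read $\calI(\calA_\delta^{\kappa-l})$ as built ball-by-ball, so that each admissible radius $r_{i_l}$ certifies containment in one specific member $\calB(\zeta_j,\gamma^{-1}(r_j))$. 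The chain \eqref{eq:cor2ToShow2} is then obtained by following these certificates. With \eqref{eq:cor2ToShow1} and \eqref{eq:cor2ToShow2} in hand, the remaining steps are routine.

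\emph{Verification of \eqref{eq:expkstep}.} I argue by induction on $l$ that $\zeta(t+l-1)\in\calB(\zeta_{i_l},\gamma^{-1}(r_{i_l}))$.
\begin{itemize}
\item For $l=1$ this is \eqref{eq:cor2ToShow1}.
\item Given it at step $l$, apply \eqref{eq:igp} with $y_\rmr(t+l)=y_{i_l}^+$. By Proposition~\ref{prop:zetabound} and monotonicity of $\gamma$, $\|\zeta_{i_l}^+-\zeta(t+l)\|\le\gamma(\gamma^{-1}(r_{i_l}))=r_{i_l}$. This is \eqref{eq:calIProp}, so $\zeta(t+l)\in\calB(\zeta_{i_l}^+,r_{i_l})\subset\calA_\delta^{\kappa-l}$.
\item If $l<\kappa$, \eqref{eq:cor2ToShow2} places $\zeta(t+l)$ in $\calB(\zeta_{i_{l+1}},\gamma^{-1}(r_{i_{l+1}}))$, which closes the induction.
\end{itemize}
At $l=\kappa$ we obtain $\zeta(t+\kappa)\in\calA_\delta^0\subset\calS_\delta$ by Proposition~\ref{prop:calA}, hence $\|y(t+\kappa)\|\le\delta$.

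Throughout I use $\zeta(t+l)\in\calZ$, which holds because it is generated by the system, so that Proposition~\ref{prop:zetabound} applies at every step.
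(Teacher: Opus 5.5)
Your base step and your verification of \eqref{eq:expkstep} from a given chain are correct. The gap is the step you flag yourself: neither of your routes establishes \eqref{eq:cor2ToShow2}, so the proof is incomplete.

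Your first attempt fails in general. Suppose $\calB(\zeta_a,\rho)$ and $\calB(\zeta_b,\rho)$ are two members of the union representing $\calA_\delta^{\kappa-l}$, with $\|\zeta_a-\zeta_b\|=2d$ and $0<d<\rho$. Suppose no other member is nearby and $\zeta_{i_l}^+$ is their midpoint. Each member's margin at $\zeta_{i_l}^+$ is only $\rho-d$. However, the largest ball about $\zeta_{i_l}^+$ inside the union has radius $r_{i_l}=\sqrt{\rho^2-d^2}$ in dimension at least two ($\rho+d$ in dimension one), and this exceeds $\rho-d$. So $\calB(\zeta_{i_l}^+,r_{i_l})$ lies in no single member. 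Nothing in \eqref{eq:expAdef} excludes this configuration. If $\zeta(t)$ lies in only one ball of $\calA_\delta^\kappa$, you cannot escape by re-choosing $i_1$, so the a priori chain need not exist.

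Your fallback, reading each $r_i$ as certified by one specific member ball, is a redefinition of $\calI(\cdot)$. It shrinks the sets in \eqref{eq:expAdef}, so it proves a variant rather than the stated theorem. The paper's own proof makes exactly the leap you distrust. From $\zeta(t+1)\in\calB(\zeta_{i_1}^+,r_{i_1})\subset\calA_\delta^{\kappa-1}=\hat{\calR}^-(\calA_\delta^{\kappa-2})$ it asserts an $(i_2,r_{i_2})$ satisfying \eqref{eq:cor2ToShow2}. The union representation only supplies a member ball containing the point $\zeta(t+1)$.

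The missing idea is to stop fixing the whole index sequence at time $t$ and instead select it along the realized trajectory. Algorithm~\ref{alg:pgigp} effectively does this by returning to Line~\ref{line:return} at every step. Your verification step gives $\zeta(t+l)\in\calB(\zeta_{i_l}^+,r_{i_l})\subset\calA_\delta^{\kappa-l}=\hat{\calR}^-(\calA_\delta^{\kappa-l-1})$ for $l<\kappa$. At that point, pick $(i_{l+1},r_{i_{l+1}})\in\calI(\calA_\delta^{\kappa-l-1})$ with $\zeta(t+l)\in\calB(\zeta_{i_{l+1}},\gamma^{-1}(r_{i_{l+1}}))$. This needs only pointwise membership in the union, exactly as in your base step.

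Your induction then runs unchanged, with this choice supplying its hypothesis in place of \eqref{eq:cor2ToShow2}. It yields \eqref{eq:expkstep} and $\|y(t+\kappa)\|\le\delta$, which is all that Theorem~\ref{thm:main} uses. The cost is that the reference $y_{i_{l+1}}^+$ depends on the realized $\zeta(t+l)$, and the ball inclusion \eqref{eq:cor2ToShow2} is weakened to a pointwise statement. Keeping a chain fixed a priori genuinely requires your fallback's modified construction.
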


\begin{proof}
    Since $\zeta(t)\in\calA_\delta^\kappa = \hat{\calR}^-(\calA_\delta^{\kappa-1})$, it follows from \eqref{eq:rHat} that there exists $(i_1,r_{i_1})\in\calI(\calA_\delta^{\kappa-1})$ such that \eqref{eq:cor2ToShow1} holds.
    Then, similar to \eqref{eq:calIProp}, it is guaranteed that 
    \begin{equation*}
        \zeta(t+1) \in \calB(\zeta_{i_1}^+,r_{i_1})\subset \calA_\delta^{\kappa-1} = \hat{\calR}^-(\calA_\delta^{\kappa-2}),
    \end{equation*}
    where the last equality implies the existence of $(i_2,r_{i_2})\in\calI(\calA_\delta^{\kappa-2})$ such that \eqref{eq:cor2ToShow2} holds for the case $l=1$.
    By repeating this reasoning, we can recursively select $(i_l,r_{i_l})$ for $l=3,\ldots, \kappa$ such that \eqref{eq:cor2ToShow2} holds, and this concludes the proof.
\end{proof}


\begin{algorithm}[t]
    \caption{Inverse learning-based control}\label{alg:pgigp}
    \begin{algorithmic}[1]
    \Require $\calD$, $\bar{\kappa}\in\bbN$, and $\delta>0$
    \renewcommand{\algorithmicrequire}{\textbf{\#Offline}}
    \Require
    \State Compute $(\calA_\delta^j)_{j=0}^{\bar{\kappa}}$ in \eqref{eq:expAdef}\label{Line:while}
    \renewcommand{\algorithmicrequire}{\textbf{\#Online}}
    \Require 
    \State $t\gets 0$ 
    \State $\kappa=\min\{0\le j\le\bar{\kappa}\mid \zeta(t)\in\calA_\delta^{j}\}$ \label{line:return}
    \State Find $(i_1,r_{i_1})$ according to Theorem~\ref{thm:refk>0}
    \State Set $y_\rmr(t+1) \leftarrow y_{i_1}^+$
    \State Apply $u(t) \gets \hat{c} ( [y_\rmr(t+1);\zeta(t)])$
    \State $t\gets t+1$ and go to line~\ref{line:return}
    \end{algorithmic}
\end{algorithm}


Finally, the following theorem presents a (data-dependent) sufficient condition under which the inverse learning-based controller \eqref{eq:igp} with appropriate reference points from $\calD_\sfto$ guarantees \eqref{eq:goal}.
We omit the proof since it directly follows from Lemma~\ref{lem:absmain} and Theorem~\ref{thm:refk>0}.

\begin{thm}\label{thm:main}\upshape
    Given the dataset $\calD$ and $\delta>0$, consider the sequence of sets $(\calA_\delta^j)_{j=0}^\infty$ given by \eqref{eq:expAdef}.
    If
    \begin{equation}\label{eq:suff}
        \zeta(0) \in \calA_\delta^\kappa \ \mbox{and} \ \calA_\delta^0 \subset \calA_\delta^1
    \end{equation}
    for some $\kappa\in\bbN$ then controller \eqref{eq:igp}, with the reference points selected from $\calD_\sfto$ according to Theorem~\ref{thm:refk>0}, guarantees \eqref{eq:goal}.
\end{thm}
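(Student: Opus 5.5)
The argument has two phases: reach $\calA_\delta^0$ in $\kappa$ steps, then stay there forever using $\calA_\delta^0\subset\calA_\delta^1$. Both phases rest on Theorem~\ref{thm:refk>0} and on Proposition~\ref{prop:calA}, which says the sets in \eqref{eq:expAdef} satisfy \eqref{eq:defProp}. This is exactly the content of Lemma~\ref{lem:absmain}, here made constructive.

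\emph{Transient phase.} Since $\zeta(0)\in\calA_\delta^\kappa$, Theorem~\ref{thm:refk>0} applied at $t=0$ supplies indices $(i_l,r_{i_l})\in\calI(\calA_\delta^{\kappa-l})$ for $l=1,\ldots,\kappa$. With the references $y_\rmr(l)=y_{i_l}^+$, the controller \eqref{eq:igp} then guarantees $\zeta(l)\in\calA_\delta^{\kappa-l}$ for all $l=1,\ldots,\kappa$. In particular $\zeta(\kappa)\in\calA_\delta^0\subset\calS_\delta$, so $\|y(\kappa)\|\le\delta$. Two small points need checking here:
\begin{itemize}
\item the closed-loop states remain in $\calZ$, which holds because they are generated by a trajectory of \eqref{eq:sys}, so Proposition~\ref{prop:zetabound} applies at each step;
\item the sets $\calA_\delta^j$ are compact, so that $\calI(\cdot)$ is well defined. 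Each $\calA_\delta^j$ is a finite union of closed balls, since $N$ is finite, and is therefore compact, provided $\calS_\delta$ is handled as in the definition of $\calA_\delta^0$.
\end{itemize}

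\emph{Invariance phase.} I will show by induction on $t\ge\kappa$ that $\zeta(t)\in\calA_\delta^0$. Suppose $\zeta(t)\in\calA_\delta^0$. By hypothesis $\calA_\delta^0\subset\calA_\delta^1$, so $\zeta(t)\in\calA_\delta^1=\hat{\calR}^-(\calA_\delta^0)$. Theorem~\ref{thm:refk>0} with $\kappa=1$ (equivalently, \eqref{eq:rHat} together with \eqref{eq:calIProp}) then gives an index $(i,r_i)\in\calI(\calA_\delta^0)$ with $\zeta(t)\in\calB(\zeta_i,\gamma^{-1}(r_i))$. Choosing $y_\rmr(t+1)=y_i^+$ yields $\zeta(t+1)\in\calB(\zeta_i^+,r_i)\subset\calA_\delta^0$. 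Hence $\zeta(t)\in\calA_\delta^0\subset\calS_\delta$ for all $t\ge\kappa$, which is \eqref{eq:goalRe} and therefore \eqref{eq:goal}.

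\emph{Main obstacle.} The only delicate issue is consistency of the reference-selection rule over time. Algorithm~\ref{alg:pgigp} recomputes the minimal index $j$ at every step rather than committing to the sequence chosen at $t=0$. To cover that implementation, I would argue that the minimal index is nonincreasing along the closed loop: if $\zeta(t)\in\calA_\delta^j$ with $j\ge1$, the chosen reference places $\zeta(t+1)$ in $\calA_\delta^{j-1}$, so the minimal index drops by at least one. Once the index is $0$, the invariance step above applies, using $\calA_\delta^0\subset\calA_\delta^1$ to pick a reference. Consequently $\zeta(t)\in\calA_\delta^0$ for all $t\ge\kappa$ under either selection policy.
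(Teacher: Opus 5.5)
Your proof is correct and follows the same route as the paper, which omits the argument as an immediate consequence of Lemma~\ref{lem:absmain} and Theorem~\ref{thm:refk>0}: reach $\calA_\delta^0$ in $\kappa$ steps via Theorem~\ref{thm:refk>0}, then use $\calA_\delta^0\subset\calA_\delta^1$ to reapply it with $\kappa=1$ at every later step. Your extra observation covers the step-by-step recomputation in Algorithm~\ref{alg:pgigp}: the minimal index drops by at least one until it reaches $0$ and then stays there, which is a sound way to handle that reading of the statement.
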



Algorithm~\ref{alg:pgigp} presents a practical guideline for implementing the proposed controller.
In practice, computing the infinitely many sets in $(\calA_\delta^j)_{j=0}^\infty$ is computationally infeasible.
To address this issue, we introduce an empirical upper bound $\bar{\kappa}\in\bbN$ and compute only $(\calA_\delta^j)_{j=0}^{\bar{\kappa}}$.
If the second condition $\calA_\delta^0 \subset \calA_\delta^1$ in \eqref{eq:suff} is not satisfied, we suggest augmenting $\calD$ with additional data until the condition holds, before executing Line~\ref{Line:while}.
At each time step $t\in\bbZ_{\ge 0}$, we determine the smallest $j$ with $0\le j \le \bar{\kappa}$ such that $\zeta(t)\in\calA_\delta^j$ and use it to select the reference point, thereby ensuring practical output regulation within fewer steps.


\begin{rem}\upshape
    We comment on the memory and computational efficiency of Algorithm~\ref{alg:pgigp}.
    Implementing the algorithm is not memory intensive since the sets $(\calA_\delta^j)_{j=0}^{\bar{\kappa}}$ are fully characterized by the finite tuples $\calI(\calS_\delta)$ and $(\calI(\calA_\delta^j))_{j=0}^{\bar{\kappa}-1}$, which can be pre-computed during the offline procedure. 
    This significantly reduces the online computation burden, as one simply has to check set inclusions, as seen in Theorem~\ref{thm:refk>0} and Line~\ref{line:return} of Algorithm~\ref{alg:pgigp}.
\end{rem}

\begin{rem}\upshape\label{rem:mimo}
    The proposed framework can be readily extended to multi-input multi-output (MIMO) case, where $m\ge 1$ and $p\ge1$.
    For notational convenience, we define $\calC := \bbR^{(n+1)p + (n-1)m}$. 
    In the MIMO case, the inverse model of the system \eqref{eq:sys} can be defined analogously to Definition~\ref{def:inverse}, where the inverse model $c:\calC \ra \bbR^m$ is now a multi-output function that satisfies \eqref{eq:inv} for all $\zeta \in \calZ\subset \bbR^{np + (n-1)m}$ and $y^+\in\calR(\zeta)\subset \bbR^p$.
    Given the dataset $\calD$ in \eqref{eq:dataset}, where $y_i^+\in \bbR^{p}$, $\zeta_i\in \bbR^{np + (n-1)m}$ and $u_i\in\bbR^m$, let the inverse model $c$ and the training output $u_i$ for $i=1,\ldots,N$ be decomposed as follows:
    \begin{equation*}
        c(\cdot) =: 
        \begin{bmatrix}
            c^1(\cdot) \\
            \vdots \\
            c^m(\cdot)
        \end{bmatrix} , \quad 
        u_i=: 
        \begin{bmatrix}
            u_i^1 \\
            \vdots \\
            u_i^m
        \end{bmatrix},
    \end{equation*}
    where $c^j:\calC \ra \bbR$ and $u_i^j \in \bbR$ for $j=1,\ldots,m$.
    Since 
    \begin{equation*}
        u_i^j = c^j([y_i^+;\zeta_i]) ,
    \end{equation*}
    we can obtain a KI estimate $\hat{c}^j:\calC \ra \bbR$ for each single-output function $c^j$, following the steps described in Section~\ref{subsec:inv}.
    Here, different kernel functions may be used for obtaining each estimate $\hat{c}^j$, which offers flexibility to capture heterogeneous characteristics of the output channels.
    By defining $\hat{c}:=[\hat{c}^1; \cdots ;\hat{c}^m]$, the analyses in Sections~\ref{subsec:err} and~\ref{sec:control} can be repeated.
    Alternatively, one may directly obtain an estimate of $c$ using vector-valued kernel methods, which are capable of capturing correlations among the single-output functions $c^j$; see, e.g., \cite{MiccPont05} and \cite{AlvaRosa12}.
\end{rem}

\subsection{NARX models with input delays}\label{subsec:extend}

We illustrate how our proposed framework can be extended to NARX models with input delays.
Suppose that system \eqref{eq:sysNARX} has an input delay $\nu\in\bbN$ such that $2\le \nu\le n$. 
This means that the system has a relative degree of at least $\nu$, i.e., it takes at least $\nu$ steps for the input to explicitly affect the output.
For notational simplicity, we focus on the case $\nu=2$ and rewrite system \eqref{eq:sysNARX} as
\begin{align}\label{eq:sysDelay}
    y(t+2) = f(y_{[t-n+2,t+1]}, u_{[t-n+2,t]}),
\end{align}
noting that the following derivations can be extended to arbitrary $\nu$ in a straightforward manner. 
We often write \eqref{eq:sysDelay} compactly as $y(t+2) = f(\zeta(t+1))$.

Since
\begin{equation}\label{eq:y+zeta}
    y(t+1) = f(\zeta(t)),
\end{equation}
substituting this into \eqref{eq:sysDelay} allows us to express $y(t+2)$ as
\begin{align}\label{eq:deffbar}
    y(t+2) &= f([y_{[t-n+2,t]};y(t+1)],u_{[t-n+2,t]}) \nonumber \\
    &=: \bar{f}(\zeta(t),u(t))
\end{align}
for some function $\bar{f}:\bbR^{2n-1}\times \bbR \to \bbR$. 
This form is identical to \eqref{eq:sys}, except that the input $u(t)$ now does not affect $y(t+1)$ but affects $y(t+2)$, which is two-steps ahead.

In what follows, we redefine the notations from Sections~\ref{sec:prob}-\ref{sec:ilc} with respect to system \eqref{eq:deffbar}.
First, we adapt \eqref{eq:reachable} and define the \textit{two-step reachable set of outputs} for any $\zeta\in\calZ$ as 
\begin{equation*}
    \calR(\zeta):= \left\{y^{++} \in \bbR \mid \exists u\in\bbR \ \mbox{such that} \ y^{++}=\bar{f}(\zeta,u)  \right\},
\end{equation*}
and reformulate Assumption~\ref{asm:inverse} as follows.

\begin{asm}\upshape\label{asm:delayInjective}
    For any $\zeta\in\calZ$ and $y^{++}\in\calR(\zeta)$, there exists a unique $u\in\bbR$ such that $y^{++}=\bar{f}(\zeta,u)$.
\end{asm}

Assumption~\ref{asm:delayInjective} ensures that system \eqref{eq:deffbar} has a global relative degree two.
Under this assumption, we define the inverse model of \eqref{eq:deffbar} as a function $c:\bbR^{2n}\to \bbR$ that satisfies
\begin{equation}\label{eq:inverseDelay}
    y^{++} = \bar{f}(\zeta,c([y^{++};\zeta]))
\end{equation}
for all $\zeta\in\calZ$ and $y^{++}\in\calR(\zeta)$.
Accordingly, we propose an inverse learning-based controller of the form 
\begin{equation}\label{eq:igp2}
    u(t) = \hat{c}([y_\rmr(t+2);\zeta(t)]),
\end{equation}
whose objective is to make the two-steps ahead output $y(t+2)$ track a desired reference $y_\rmr(t+2)$.

In order to obtain an estimate $\hat{c}$ of the inverse model $c$ in this case, we slightly modify the dataset $\calD$ given in \eqref{eq:dataset}, as
\begin{equation}\label{eq:dataset2}
    \calD:=\{([y_i^{++};\zeta_i],u_i) \}_{i=1}^N,
\end{equation}
where the number of training data is now $N=T-n$ and $y_i^{++}:=y^\rmd(i+n)$, so that $y_i^{++}=\bar{f}(\zeta_i,u_i)$ for all $i=1,\ldots,N$. 
We also redefine $\xi_i:=[y_i^{++};\zeta_i]$ and $\calD_\sfto:=\{y_i^{++}\}_{i=1}^N$.
The estimate $\hat{c}$ is then obtained by solving \eqref{eq:krr} with respect to the dataset \eqref{eq:dataset2}.

Given $\hat{c}$, the functions $\gamma_u$ in Proposition~\ref{prop:bound} and $\gamma$ in Proposition~\ref{prop:zetabound} can be computed as follows.

\begin{prop}\label{prop:bound2}\upshape
    Consider system \eqref{eq:sysDelay} and the dataset $\calD$ in \eqref{eq:dataset2}.
    There exists a known class $\mathcal{K}_\infty$ function $\gamma_u: \bbR_{\ge 0} \ra \bbR_{\ge 0}$ such that for any $ ([y_i^{++};\zeta_i],u_i)\in \calD$, 
    \begin{align}\label{eq:bound2}
        \left \| u_i - \hat{c}([y_i^{++};\zeta(t)]) \right \| &\le \gamma_u(\|\zeta_i-\zeta(t) \|),
    \end{align}
    for all $\zeta(t) \in \calZ$.
\end{prop}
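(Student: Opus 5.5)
The plan is to repeat the argument of Proposition~\ref{prop:bound}, now for the delayed inverse model defined by \eqref{eq:inverseDelay} and the dataset \eqref{eq:dataset2}. First I check that each data point is a valid input/output pair of $c$. By construction $y_i^{++}=\bar{f}(\zeta_i,u_i)$, so $y_i^{++}\in\calR(\zeta_i)$. The augmented state $\zeta_i$ is taken from a genuine trajectory of \eqref{eq:sysDelay}, so $\zeta_i\in\calZ$. Assumption~\ref{asm:delayInjective} then makes $u_i$ the unique input with $y_i^{++}=\bar f(\zeta_i,u_i)$. Since \eqref{eq:inverseDelay} gives $\bar f(\zeta_i,c([y_i^{++};\zeta_i]))=y_i^{++}$, uniqueness yields $u_i=c([y_i^{++};\zeta_i])=c(\xi_i)$.

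Next I bound the distance to the nearest training input. With $\epsilon(\cdot)$ as in Lemma~\ref{lem:errbound}, computed over the redefined $\calD_\sfin=\{\xi_i\}_{i=1}^N$, we have $\epsilon([y_i^{++};\zeta(t)])\le\|\xi_i-[y_i^{++};\zeta(t)]\|=\|\zeta_i-\zeta(t)\|$. Lemma~\ref{lem:errbound} applies unchanged, because $\hat c$ is the kernel interpolant \eqref{eq:chatCF} built from this dataset with the same isotropic, decreasing, strictly positive definite kernel. Since $\eta$ is of class $\calK$, hence increasing, this gives $\|c([y_i^{++};\zeta(t)])-\hat c([y_i^{++};\zeta(t)])\|\le\eta(\|\zeta_i-\zeta(t)\|)$.

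Finally, the triangle inequality gives
\begin{equation*}
\|u_i-\hat c([y_i^{++};\zeta(t)])\|\le\|c([y_i^{++};\zeta_i])-c([y_i^{++};\zeta(t)])\|+\eta(\|\zeta_i-\zeta(t)\|).
\end{equation*}
The first term is at most $L_c\|\zeta_i-\zeta(t)\|$, using the Lipschitz constant of Assumption~\ref{asm:lipschitz} now understood for the redefined inverse model $c$. I therefore set $\gamma_u(\epsilon):=L_c\epsilon+\eta(\epsilon)$. This is of class $\calK_\infty$, since it is the sum of a class $\calK_\infty$ function and a class $\calK$ function. It is known, since $L_c$ is known and $\eta$ is known under Assumption~\ref{asm:RKHSnorm}, again read for the redefined $c$.

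The main obstacle is justifying that Assumptions~\ref{asm:lipschitz} and~\ref{asm:RKHSnorm} carry over to the inverse model of \eqref{eq:deffbar}. These are hypotheses about the new $c$ rather than consequences of the old ones, so I would state explicitly that they are imposed on the redefined $c$. Everything else is a verbatim transcription of the argument for \eqref{eq:inputBound}.
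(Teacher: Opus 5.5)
Your proposal is correct and follows the paper's proof essentially verbatim. You bound $\epsilon([y_i^{++};\zeta(t)])$ by $\|\zeta_i-\zeta(t)\|$ and then repeat the triangle-inequality, Lipschitz, and kernel-error-bound chain of \eqref{eq:gammauProof}, which yields $\gamma_u(\epsilon)=L_c\epsilon+\eta(\epsilon)$. Your extra remarks are left implicit in the paper but consistent with it: that $u_i=c(\xi_i)$ follows from Assumption~\ref{asm:delayInjective}, and that Assumptions~\ref{asm:lipschitz} and~\ref{asm:RKHSnorm} must be read for the redefined $c$.
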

\begin{proof}
    Adapting the definition of $\epsilon(\cdot)$ given in Lemma~\ref{lem:errbound} with respect to the redefined $\xi_i$, we obtain $\epsilon([y_i^{++};\zeta(t)]) \le \|\zeta_i-\zeta(t) \|$.
    The proof is then completed by following the same steps as in \eqref{eq:gammauProof}.
\end{proof}

\begin{prop}\upshape\label{prop:gamma2}
    Consider system \eqref{eq:sysDelay} and the dataset $\calD$ in \eqref{eq:dataset2}.
    For the class $\calK_\infty$ function $\gamma:\bbR_{\ge 0}\ra\bbR_{\ge 0}$ defined by
    \begin{equation}\label{eq:gammaDef2}
        \gamma(\epsilon) := \gamma_u(\epsilon) + (1+L_f) \epsilon, 
    \end{equation}
    controller \eqref{eq:igp2} with the reference point $y_\rmr(t+2)=y_i^{++}$ guarantees
    \begin{align}\label{eq:zetaboundToShow2}
        \| \zeta_i^+ - \zeta(t+1) \| \le \gamma(\|\zeta_i - \zeta(t) \|)
    \end{align}
    for any $([y_i^{++};\zeta_i],u_i)\in \calD$ and $\zeta(t)\in\calZ$.
\end{prop}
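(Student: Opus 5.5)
The plan is to mirror the proof of Proposition~\ref{prop:zetabound}. Since the input now enters only two steps ahead, the new output component of $\zeta(t+1)$ does not depend on $u(t)$. This is why the term $\gamma_y$ is replaced by $L_f\epsilon$.

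First I fix what $\zeta_i^+$ means in this setting. For $([y_i^{++};\zeta_i],u_i)\in\calD$ with $\zeta_i=\zeta^\rmd(i+n-2)$, define $\bfy_i,\bfu_i$ as in \eqref{eq:zetaiRe} and set $\zeta_i^+:=[\bfy_i;y_i^+;\bfu_i;u_i]$, where $y_i^+:=y^\rmd(i+n-1)$ is the intermediate output of the data trajectory. This is exactly the augmented state $\zeta^\rmd(i+n-1)$. Since the data obey \eqref{eq:y+zeta}, we have $y_i^+=f(\zeta_i)$. Likewise, for the closed loop, \eqref{eq:zeta(t+1)} still holds: $\zeta(t+1)=[\bfy(t);y(t+1);\bfu(t);u(t)]$ with $y(t+1)=f(\zeta(t))$ and $u(t)=\hat{c}([y_i^{++};\zeta(t)])$.

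Next I apply the triangle inequality blockwise:
\begin{equation*}
\|\zeta_i^+-\zeta(t+1)\|\le \|y_i^+-y(t+1)\|+\|u_i-u(t)\|+\left\|\begin{bmatrix}\bfy(t)-\bfy_i\\ \bfu(t)-\bfu_i\end{bmatrix}\right\|.
\end{equation*}
The three terms are bounded as follows.
\begin{itemize}
\item The last term is at most $\|\zeta_i-\zeta(t)\|$, because it is a sub-vector (a coordinate projection) of $\zeta_i-\zeta(t)$.
\item The input term is at most $\gamma_u(\|\zeta_i-\zeta(t)\|)$ by Proposition~\ref{prop:bound2}.
\item The output term equals $\|f(\zeta_i)-f(\zeta(t))\|$, which is at most $L_f\|\zeta_i-\zeta(t)\|$ by the Lipschitz property of $f$ from Assumption~\ref{asm:lipschitz}, adapted to \eqref{eq:sysDelay} with $f$ viewed as a function of $\zeta\in\bbR^{2n-1}$.
\end{itemize}
Summing the three bounds gives $\gamma_u(\epsilon)+(1+L_f)\epsilon$, which is \eqref{eq:gammaDef2}. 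This function is of class $\calK_\infty$ because $\gamma_u$ is $\calK_\infty$ and $(1+L_f)\epsilon$ is linear with positive slope.

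The main point needing care is the output term, specifically the justification that $y_i^+=f(\zeta_i)$ with the same $f$ that generates $y(t+1)$. I would handle it by noting that the data trajectory is a trajectory of \eqref{eq:sysDelay} shifted by one step, so \eqref{eq:y+zeta} applies to it. I would also state explicitly that the Lipschitz constant $L_f$ of Assumption~\ref{asm:lipschitz} is understood for the $f$ in \eqref{eq:sysDelay}, whose argument $\zeta(t)$ already contains all inputs up to $u(t-1)$. The remaining steps are routine.
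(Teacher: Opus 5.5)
Your proposal is correct and follows the paper's proof essentially verbatim. Both split $\zeta_i^+-\zeta(t+1)$ blockwise by the triangle inequality, bound the shifted sub-vector by $\|\zeta_i-\zeta(t)\|$, bound the output term by $L_f\|\zeta_i-\zeta(t)\|$ using $y_i^+=f(\zeta_i)$ and $y(t+1)=f(\zeta(t))$, and bound the input term by $\gamma_u$ via Proposition~\ref{prop:bound2}. You also identify $y_i^+:=y^\rmd(i+n-1)$, so that $\zeta_i^+=\zeta^\rmd(i+n-1)$, which fills in a definition the paper leaves implicit for the dataset \eqref{eq:dataset2}.
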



\begin{proof}
    Having \eqref{eq:zeta(t+1)}, \eqref{eq:zeta+} and \eqref{eq:y+zeta} in mind, observe that 
    \begin{align*}
        \zeta_i^+ - \zeta(t+1)  
        &= \begin{bmatrix}
            \bfy_i \\
            y_i^+ \\
            \bfu_i \\
            u_i
        \end{bmatrix} - 
        \begin{bmatrix}
            \bfy(t) \\
            y(t+1) \\ 
            \bfu(t) \\
            u(t)
        \end{bmatrix} 
        = 
        \begin{bmatrix}
            \bfy_i \\
            f(\zeta_i) \\
            \bfu_i \\
            u_i
        \end{bmatrix} - 
        \begin{bmatrix}
            \bfy(t) \\
            f(\zeta(t)) \\ 
            \bfu(t) \\
            u(t)
        \end{bmatrix}
    \end{align*}
    By the triangle inequality, Lipschitz continuity of $f$, and Proposition~\ref{prop:bound2}, we have 
    \begin{align*}
         \left\|\zeta_i^+-\zeta(t+1) \right\|  
        &\le \left\|\begin{bmatrix}
            \bfy_i - \bfy(t) \\
            \bfu_i - \bfu(t)
        \end{bmatrix}  \right\| + \left\| f(\zeta_i) - f(\zeta(t)) \right\| + \left\| u_i - u(t) \right\| \\ 
        &\le (1+L_f)\left\|\zeta_i-\zeta(t) \right\| + \gamma_u(\|\zeta_i-\zeta(t)\|) \\
        &=: \gamma(\|\zeta_i-\zeta(t)\|).
    \end{align*}
    and this concludes the proof.
\end{proof}

Proposition~\ref{prop:gamma2} explicitly characterizes the upper bound function $\gamma$ with respect to the input delayed system \eqref{eq:sysDelay}.
Consequently, all subsequent analyses and results established in Section~\ref{sec:ilc} can be repeated by employing the redefined $\gamma$ in \eqref{eq:gammaDef2}.
 
\section{Simulation Results}\label{sec:sim}

This section provides simulation results\footnote{The code is fully available at \url{https://github.com/yj-jang-98/Inv_Learn_Ctrl}.} to demonstrate the effectiveness of the proposed inverse learning-based controller \eqref{eq:igp}.
The established theoretical guarantees are first validated through a numerical example. 
We then apply the proposed controller to the stabilization problem of an inverted pendulum to demonstrate its practical applicability under realistic conditions. 
In particular, we evaluate the performance of the proposed controller under additive measurement noise and compare the resulting closed-loop behavior against the noise-free case.

\subsection{Numerical example}\label{subsec:numerical}

Let system \eqref{eq:sys} be given as 
\begin{align} \label{eq:sysNumerical}
    y(t+1)
    &= -3 + \sqrt{- \|\zeta(t)\|_2^2 - 16 \ln(u(t))} 
\end{align}
with $n=2$ and $m=p=1$,
where the input $u(t)\in\bbR$ and output $y(t)\in\bbR$ are subject to the constraints 
\begin{align}\label{eq:inputConst}
    u(t) & \in \calU(\zeta(t)) := \left\{ u\in \bbR \mid 4 \le - \left\|\zeta(t) \right\|_2^2 - 16\ln(u) \le 16 \right\},\\
    y(t) & \in \calY := [-1,1]. \nonumber
\end{align}
We employed the widely used squared-exponential kernel \cite{WillRasm06} defined by 
\begin{align*}
    k(\xi,\xi') = \exp \left( -\frac{\|\xi-\xi'\|_2^2}{2\sigma_l^2} \right),
\end{align*}
which is strictly positive definite, isotropic, and decreasing. 
We set the length-scale hyperparameter to $\sigma_l = 2\sqrt{2}$.

First, we verify that Assumptions~\ref{asm:inverse}--\ref{asm:RKHSnorm} hold for system \eqref{eq:sysNumerical}.
The input/output constraints \eqref{eq:inputConst} imply that the feasible set of augmented states is $\calZ = \calY \times \calY \times \calU$ .
Moreover, the input constraint ensures that if $\zeta(t)\in\calZ$ and $u(t)\in\calU(\zeta(t))$, then $y(t+1)$ is well-defined and satisfies $y(t+1) \in \calY$.
Consequently, the one-step reachable set of outputs is given by $\calR(\zeta)=\calY$ for all $\zeta\in\calZ$.
Since $f(\zeta,u)$ is strictly decreasing in $u$ for any fixed $\zeta \in \calZ$, Assumption~\ref{asm:inverse} holds, and the corresponding inverse model is obtained as 
\begin{align*}
    c([y^+;\zeta])
    = \exp\left(-\frac{(y^+ + 3)^2 + \|\zeta\|_2^2}{16}\right)
    = k\left([y^+;\zeta],[-3;0;0;0]\right).
\end{align*}
Therefore, $c\in\mathcal{H}_k$ with $\|c\|_{\mathcal{H}_k}=1$; hence, we set $\Gamma=1$ in Assumption~\ref{asm:RKHSnorm}.
While Assumption~\ref{asm:lipschitz} is stated globally, it suffices to compute the Lipschitz constants $L_f$ and $L_c$ locally over the compact domain induced by the constraints in \eqref{eq:inputConst}.
Direct calculation yields $L_f = 6.5$ and $L_c = 0.22$, and as a result, the functions $\eta$ and $\gamma$ in Lemma~\ref{lem:errbound} and Proposition~\ref{prop:zetabound}, respectively, can be computed as 
\begin{align*}
    \eta(\epsilon) &= \sqrt{1- \exp\left(-\frac{\epsilon^2}{16}\right)}, \\
    \gamma(\epsilon) &= \left(L_c + L_f + L_fL_c +1  \right)\epsilon + \left( 1+L_f\right) \eta(\epsilon). 
\end{align*}

The training dataset $\calD$ was generated by performing multiple one-step experiments on a uniform grid of initial conditions.
Specifically, we constructed initial augmented states of the form $\zeta^\rmd(0)= [y^\rmd(-1);y^\rmd(0);u^\rmd(-1)]$
by selecting $y^\rmd(-1)$ and $y^\rmd(0)$ from a uniform grid over $\calY$ consisting of seven points, and selecting $u^\rmd(-1)$ from a uniform grid over $[0,1]$ with four points.
For each such initial condition, we then applied ten random inputs $u^\rmd(0)\in\calU(\zeta^\rmd(0))$ to obtain $\zeta^\rmd(1)$, yielding a total of $N=280$ training data.

For the implementation of Algorithm~\ref{alg:pgigp}, some practical issues arise. 
First, the sets $\calA_\delta^j$ cannot be computed for infinitely many $j\in\bbZ_{\ge 0}$ in practice. 
Second, for a given $\delta$, there may not exist a $\kappa\in\bbZ_{\ge 0}$ for which the sufficient condition \eqref{eq:suff} holds.
To address these issues, we set $\bar{\kappa}=20$ and consider the set of candidate accuracy levels $\Delta=\{0.1,0.2,0.3,0.4,0.5,1,1.5,2,3 \}$.
Then, for each $\delta\in\Delta$, we precomputed $(\calA_\delta^j)_{j=0}^{\bar{\kappa}}$.
When executing Line~\ref{Line:while} of Algorithm~\ref{alg:pgigp}, a pair of $0\le \kappa \le \bar{\kappa}$ and $\delta\in\Delta$ that satisfies \eqref{eq:suff} is selected, preferring smaller $\delta$ and then smaller $\kappa$.

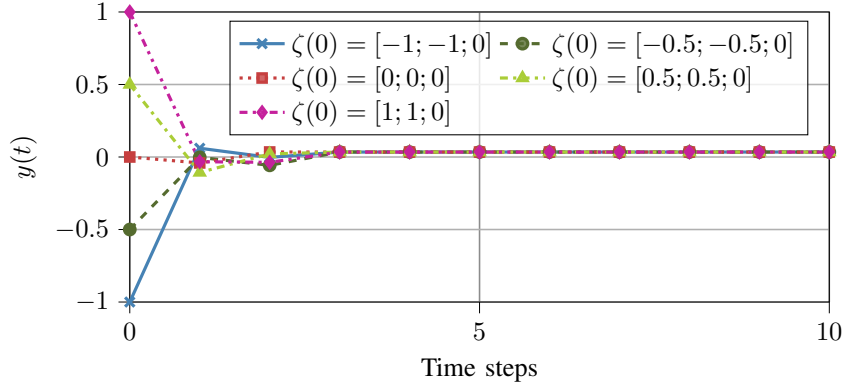
\begin{figure}[t]
    \centering
\begin{tikzpicture}

\definecolor{darkgray176}{RGB}{176,176,176}
\definecolor{darkolivegreen}{RGB}{85,107,47}
\definecolor{greenyellow19622665}{RGB}{170,205,55}
\definecolor{indianred1986363}{RGB}{198,63,63}
\definecolor{mediumvioletred19733156}{RGB}{197,33,156}
\definecolor{steelblue}{RGB}{70,130,180}

\begin{axis}[
width=0.6\textwidth,
height=0.3\textwidth,
tick align=outside,
tick pos=left,
xlabel={Time steps},
xmin=0, xmax=10,
xtick style={color=black},
xtick={0,5,10},
y grid style={darkgray176},
ylabel={$y(t)$},
ymin=-1, ymax=1,
ytick style={color=black},
xmajorgrids,
ymajorgrids,
grid style={line width=0.6pt, draw=black!45},
legend pos= north east,
legend cell align=left,
legend columns=2,
legend style={draw=black, fill=white, fill opacity=0.8, text opacity=1},
]
\addplot[
  line width=1.2pt, steelblue,
  mark=x, mark options={solid},
  mark size=2.5pt
]
table {%
0 -1
1 0.0604664450568415
2 -0.00251995029169416
3 0.0340113876004113
4 0.0340996526333384
5 0.0341810109649932
6 0.0341800151467648
7 0.0341813958501871
8 0.0341814198847175
9 0.0341809566669329
10 0.0341809566669329
};
\addlegendentry{$\zeta(0)=[-1;-1;0]$}

\addplot [line width=1.2pt,
  mark=*, mark options={solid},
  mark size=2pt, darkolivegreen, dashed]
table {%
0 -0.5
1 -0.00247637944630119
2 -0.057506052233097
3 0.0345413311574192
4 0.034092698772846
5 0.0341829596072012
6 0.0341816055441346
7 0.0341795754239111
8 0.0341816701509483
9 0.0341824485084432
10 0.0341824485084432
};
\addlegendentry{$\zeta(0)=[-0.5;-0.5;0]$}

\addplot [line width=1.2pt,
  mark=square*, mark options={solid},
  mark size=1.6pt, indianred1986363, dotted]
table {%
0 0
1 -0.0410664812414052
2 0.0347266107472888
3 0.0340591719538779
4 0.0341836600398131
5 0.0341808838218882
6 0.034180143596044
7 0.0341811875338793
8 0.0341812839714768
9 0.0341807650029735
10 0.0341807650029735
};
\addlegendentry{$\zeta(0)=[0;0;0]$}

\addplot [line width=1.2pt,
  mark=triangle*, mark options={solid},
  mark size=2pt, greenyellow19622665, dashdotted]
table {%
0 0.5
1 -0.106530242175488
2 0.0212823851621247
3 0.0341614742438514
4 0.0341371176567784
5 0.034181220603553
6 0.0341798928364203
7 0.0341813997052056
8 0.0341820568958129
9 0.0341810372091089
10 0.0341810372091089
};
\addlegendentry{$\zeta(0)=[0.5;0.5;0]$}

\addplot [line width=1.2pt,
  mark=diamond*, mark options={solid},
  mark size=2pt, mediumvioletred19733156, dashdotdotted]
table {%
0 1
1 -0.0325871093534196
2 -0.0366388124527819
3 0.033978815131547
4 0.0340843118389325
5 0.034180208078888
6 0.034180087254402
7 0.0341814113408261
8 0.0341813553302472
9 0.0341809548002363
10 0.0341809548002363
};
\addlegendentry{$\zeta(0)=[1;1;0]$}

\end{axis}

\end{tikzpicture}
    \caption{Output trajectories generated by the proposed controller with initial conditions $[-1;-1;0]$ (blue solid line), $[-0.5;-0.5;0]$ (green dashed line), $[0;0;0]$ (red dotted line), $[0.5;0.5;0]$ (yellow dash-dotted line), and $[1;1;0]$ (violet dash-dot-dotted line). }
    \label{fig:traj1}
\end{figure}

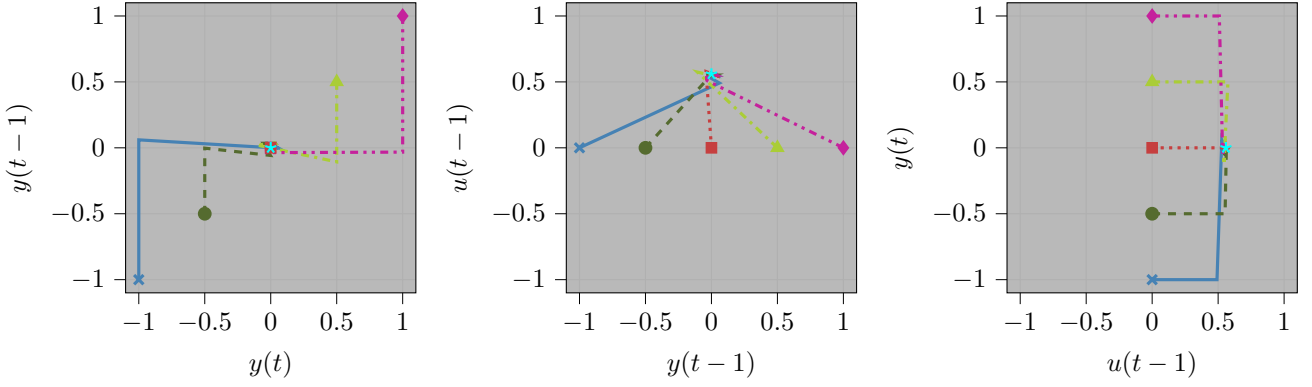
\begin{figure}[t]
    \centering
\begin{tikzpicture}

\definecolor{aqua}{RGB}{0,255,255}
\definecolor{darkgray176}{RGB}{176,176,176}
\definecolor{darkolivegreen}{RGB}{85,107,47}
\definecolor{greenyellow19622665}{RGB}{170,205,55}
\definecolor{indianred1986363}{RGB}{198,63,63}
\definecolor{lightgray204}{RGB}{204,204,204}
\definecolor{mediumvioletred19733156}{RGB}{197,33,156}
\definecolor{silver185}{RGB}{185,185,185}
\definecolor{steelblue}{RGB}{70,130,180}

\begin{groupplot}[group style={group size=3 by 1,horizontal sep=2cm}]
\nextgroupplot[
width=0.3\textwidth,
height=0.3\textwidth,
axis background/.style={fill=silver185},
tick align=outside,
tick pos=left,
x grid style={darkgray176},
xlabel={$y(t)$},
xmajorgrids,
xmin=-1.1, xmax=1.1,
xtick style={color=black},
y grid style={darkgray176},
ylabel={$y(t-1)$},
ymajorgrids,
ymin=-1.1, ymax=1.1,
ytick style={color=black},
xtick={-1,-0.5,0, 0.5, 1},
ytick={-1,-0.5,0, 0.5, 1}
]
\addplot [line width=1.2pt, steelblue,
  mark=x, mark options={solid},
  mark size=2.5pt,mark indices={1}]
table {%
-1 -1
-1 0.0604664450568415
0.0604664450568415 -0.00251995029169416
-0.00251995029169416 0.0340113876004113
0.0340113876004113 0.0340996526333384
0.0340996526333384 0.0341810109649932
0.0341810109649932 0.0341800151467648
0.0341800151467648 0.0341813958501871
0.0341813958501871 0.0341814198847175
0.0341814198847175 0.0341809566669329
};
\addplot [line width=1.2pt,
  mark=*, mark options={solid},
  mark size=2pt, darkolivegreen, mark indices={1}, dashed]
table {%
-0.5 -0.5
-0.5 -0.00247637944630119
-0.00247637944630119 -0.057506052233097
-0.057506052233097 0.0345413311574192
0.0345413311574192 0.034092698772846
0.034092698772846 0.0341829596072012
0.0341829596072012 0.0341816055441346
0.0341816055441346 0.0341795754239111
0.0341795754239111 0.0341816701509483
0.0341816701509483 0.0341824485084432
};
\addplot [line width=1.2pt,
  mark=square*, mark options={solid},
  mark size=1.6pt, indianred1986363,  mark indices={1}, dotted]
table {%
0 0
0 -0.0410664812414052
-0.0410664812414052 0.0347266107472888
0.0347266107472888 0.0340591719538779
0.0340591719538779 0.0341836600398131
0.0341836600398131 0.0341808838218882
0.0341808838218882 0.034180143596044
0.034180143596044 0.0341811875338793
0.0341811875338793 0.0341812839714768
0.0341812839714768 0.0341807650029735
};
\addplot [line width=1.2pt,
  mark=triangle*, mark options={solid},
  mark size=2pt, greenyellow19622665, mark indices={1},dashdotted]
table {%
0.5 0.5
0.5 -0.106530242175488
-0.106530242175488 0.0212823851621247
0.0212823851621247 0.0341614742438514
0.0341614742438514 0.0341371176567784
0.0341371176567784 0.034181220603553
0.034181220603553 0.0341798928364203
0.0341798928364203 0.0341813997052056
0.0341813997052056 0.0341820568958129
0.0341820568958129 0.0341810372091089
};
\addplot [line width=1.2pt,
  mark=diamond*, mark options={solid},
  mark size=2pt, mediumvioletred19733156, mark indices={1}, dashdotdotted]
table {%
1 1
1 -0.0325871093534196
-0.0325871093534196 -0.0366388124527819
-0.0366388124527819 0.033978815131547
0.033978815131547 0.0340843118389325
0.0340843118389325 0.034180208078888
0.034180208078888 0.034180087254402
0.034180087254402 0.0341814113408261
0.0341814113408261 0.0341813553302472
0.0341813553302472 0.0341809548002363
};

\addplot[
  only marks,
  mark=star,
  mark size=2.2pt,
  mark options={draw=aqua,line width=0.8pt}
]
table{
x y
0 0
};

\nextgroupplot[
width=0.3\textwidth,
height=0.3\textwidth,
axis background/.style={fill=silver185},
tick align=outside,
tick pos=left,
x grid style={darkgray176},
xlabel={$y(t-1)$},
xmajorgrids,
xmin=-1.1, xmax=1.1,
xtick style={color=black},
y grid style={darkgray176},
ylabel={$u(t-1)$},
ymajorgrids,
ymin=-1.1, ymax=1.1,
ytick style={color=black},
xtick={-1,-0.5,0, 0.5, 1},
ytick={-1,-0.5,0, 0.5, 1},
]
\addplot [line width=1.2pt, steelblue,
  mark=x, mark options={solid},
  mark size=2.5pt,mark indices={1}]
table {%
-1 0
0.0604664450568415 0.491445894019765
-0.00251995029169416 0.527620043380331
0.0340113876004113 0.552691991811329
0.0340996526333384 0.551824887107642
0.0341810109649932 0.551801007072307
0.0341800151467648 0.551801725468913
0.0341813958501871 0.55180121993056
0.0341814198847175 0.551801233234201
0.0341809566669329 0.551801326359687
};
\addplot [line width=1.2pt,
  mark=*, mark options={solid},
  mark size=2pt, darkolivegreen, mark indices={1},dashed]
table {%
-0.5 0
-0.00247637944630119 0.552765321479725
-0.057506052233097 0.562218822153518
0.0345413311574192 0.551292222631425
0.034092698772846 0.551764555096362
0.0341829596072012 0.551801658950708
0.0341816055441346 0.55180137957425
0.0341795754239111 0.551801605736145
0.0341816701509483 0.551801166715997
0.0341824485084432 0.551801020375947
};
\addplot [line width=1.2pt,
  mark=square*, mark options={solid},
  mark size=1.6pt, indianred1986363,  mark indices={1},dotted]
table {%
0 0
-0.0410664812414052 0.578564367098046
0.0347266107472888 0.550667390531359
0.0340591719538779 0.551850762689106
0.0341836600398131 0.551797867413065
0.0341808838218882 0.551801752076194
0.034180143596044 0.551801472699736
0.0341811875338793 0.551801273145124
0.0341812839714768 0.551801259841483
0.0341807650029735 0.55180136627061
};
\addplot [line width=1.2pt,
  mark=triangle*, mark options={solid},
  mark size=2pt, greenyellow19622665, mark indices={1},dashdotted]
table {%
0.5 0
-0.106530242175488 0.57435336565802
0.0212823851621247 0.544732130807319
0.0341614742438514 0.551746315804749
0.0341371176567784 0.551837312708201
0.034181220603553 0.551800049210165
0.0341798928364203 0.551801698861631
0.0341813997052056 0.55180121993056
0.0341820568958129 0.551801100197792
0.0341810372091089 0.551801313056046
};
\addplot [line width=1.2pt,
  mark=diamond*, mark options={solid},
  mark size=2pt, mediumvioletred19733156, mark indices={1}, dashdotdotted]
table {%
1 0
-0.0325871093534196 0.508980179141126
-0.0366388124527819 0.533870479659882
0.033978815131547 0.552513124360026
0.0340843118389325 0.551788914062769
0.034180208078888 0.551802656723773
0.034180087254402 0.55180168555799
0.0341814113408261 0.55180121993056
0.0341813553302472 0.551801246537842
0.0341809548002363 0.551801326359687
};
\addplot[
  only marks,
  mark=star,
  mark size=2.2pt,
  mark options={draw=aqua,line width=0.8pt}
]
table{
x y
0 0.559
};

\nextgroupplot[
width=0.3\textwidth,
height=0.3\textwidth,
axis background/.style={fill=silver185},
tick align=outside,
tick pos=left,
x grid style={darkgray176},
xlabel={$u(t-1)$},
xmajorgrids,
xmin=-1.1, xmax=1.1,
xtick style={color=black},
y grid style={darkgray176},
ylabel={$y(t)$},
ymajorgrids,
ymin=-1.1, ymax=1.1,
ytick style={color=black},
xtick={-1,-0.5,0, 0.5, 1},
ytick={-1,-0.5,0, 0.5, 1},
]
\addplot [line width=1.2pt, steelblue,
  mark=x, mark options={solid},
  mark size=2.5pt,mark indices={1}]
table {%
0 -1
0.491445894019765 -1
0.527620043380331 0.0604664450568415
0.552691991811329 -0.00251995029169416
0.551824887107642 0.0340113876004113
0.551801007072307 0.0340996526333384
0.551801725468913 0.0341810109649932
0.55180121993056 0.0341800151467648
0.551801233234201 0.0341813958501871
0.551801326359687 0.0341814198847175
};
\addplot [line width=1.2pt,
  mark=*, mark options={solid},
  mark size=2pt, darkolivegreen, mark indices={1},dashed]
table {%
0 -0.5
0.552765321479725 -0.5
0.562218822153518 -0.00247637944630119
0.551292222631425 -0.057506052233097
0.551764555096362 0.0345413311574192
0.551801658950708 0.034092698772846
0.55180137957425 0.0341829596072012
0.551801605736145 0.0341816055441346
0.551801166715997 0.0341795754239111
0.551801020375947 0.0341816701509483
};
\addplot [line width=1.2pt,
  mark=square*, mark options={solid},
  mark size=1.6pt, indianred1986363,  mark indices={1},dotted]
table {%
0 0
0.578564367098046 0
0.550667390531359 -0.0410664812414052
0.551850762689106 0.0347266107472888
0.551797867413065 0.0340591719538779
0.551801752076194 0.0341836600398131
0.551801472699736 0.0341808838218882
0.551801273145124 0.034180143596044
0.551801259841483 0.0341811875338793
0.55180136627061 0.0341812839714768
};
\addplot [line width=1.2pt,
  mark=triangle*, mark options={solid},
  mark size=2pt, greenyellow19622665, mark indices={1},dashdotted]
table {%
0 0.5
0.57435336565802 0.5
0.544732130807319 -0.106530242175488
0.551746315804749 0.0212823851621247
0.551837312708201 0.0341614742438514
0.551800049210165 0.0341371176567784
0.551801698861631 0.034181220603553
0.55180121993056 0.0341798928364203
0.551801100197792 0.0341813997052056
0.551801313056046 0.0341820568958129
};
\addplot [line width=1.2pt,
  mark=diamond*, mark options={solid},
  mark size=2pt, mediumvioletred19733156, mark indices={1}, dashdotdotted]
table {%
0 1
0.508980179141126 1
0.533870479659882 -0.0325871093534196
0.552513124360026 -0.0366388124527819
0.551788914062769 0.033978815131547
0.551802656723773 0.0340843118389325
0.55180168555799 0.034180208078888
0.55180121993056 0.034180087254402
0.551801246537842 0.0341814113408261
0.551801326359687 0.0341813553302472
};
\addplot[
  only marks,
  mark=star,
  mark size=2.2pt,
  mark options={draw=aqua,line width=0.8pt}
]
table{
x y
0.559 0
};
\end{groupplot}

\end{tikzpicture}
    \caption{ Two-dimensional projections of the three-dimensional trajectories $\zeta(t)=[y(t-1);y(t);u(t-1)]$ generated by the proposed controller with initial conditions $[-1;-1;0]$ (blue solid line), $[-0.5;-0.5;0]$ (green dashed line), $[0;0;0]$ (red dotted line), $[0.5;0.5;0]$ (yellow dash-dotted line), and $[1;1;0]$ (violet dash-dot-dotted line). 
    The gray areas correspond to the set defined by \eqref{eq:sumSet}. 
    The initial conditions of each trajectory and the equilibrium point $\zeta^*=[0;0;u^*]$ (cyan star) are indicated by their respective markers.}
    \label{fig:A}
\end{figure}

The output trajectories generated by the proposed controller from five different initial conditions are depicted in Fig.~\ref{fig:traj1}. 
Fig.~\ref{fig:A} shows the corresponding trajectories of the augmented state $\zeta(t)$, projected onto the $(y(t),y(t-1))$, $(y(t-1),u(t-1))$, and $(u(t-1),y(t))$ coordinate planes. 
The gray region represents the set 
\begin{align}\label{eq:sumSet}
    \bigcup_{\delta\in\Delta,~j=1,\ldots,\bar{\kappa}} \calA_\delta^j,
\end{align}
which, in this experiment, covers the entire feasible set of augmented state $\calZ$. 
The cyan star indicates the equilibrium point $\zeta^*=[0;0;u^*]$, where $u^*$ is defined implicitly by
\begin{align*}
    0 = -3 + \sqrt{-(u^*)^2 - 16\ln(u^*)}. 
\end{align*}
Equivalently, $u^*$ satisfies $(u^*)^2 + 16\ln(u^*)+9=0$, which admits a unique solution, numerically given by $u^*\approx 0.5588$.
That is, the output remains zero when the input is held constant at $u(t)=u^*$.
It is observed that the trajectories of the augmented state converge toward the equilibrium point $\zeta^*$ for all considered initial conditions, indicating that the proposed controller achieves practical output regulation and validating the guarantees established in Theorem~\ref{thm:main}. 

In practice, it is recommended to select $\Delta$ and $\bar{\kappa}$ sufficiently large, which enlarges the set defined by \eqref{eq:sumSet}. 
This increases the likelihood that the condition $\zeta(0) \in \calA_\delta^\kappa$ in \eqref{eq:suff} is satisfied for some $\delta\in\Delta$ and $0\le \kappa \le \bar{\kappa}$.

\subsection{Inverted pendulum case study}

Next, we consider the standard inverted pendulum as shown in Fig.~\ref{fig:inv_pendulum}. 
Here, $m\in\bbR$ and $l\in\bbR$ are the mass and length of the pendulum, respectively, $\theta\in\bbR$ is the angular displacement measured from the upright equilibrium, and $\tau\in\bbR$ is the applied torque. 
The equation of motion can be directly derived as
\begin{align}\label{eq:pendulumCont}
    ml^2 \ddot{\theta} + b\dot{\theta} - mgl \sin(\theta) = \tau,
\end{align}
where $g\in\bbR$ denotes the gravitational acceleration and $b\in\bbR$ denotes the coefficient of the viscous friction.

To obtain a discrete-time representation with a sampling period of $T_s>0$, we employ the standard discretization 
\begin{align*}
    \ddot{\theta}(tT_s) &\approx \frac{\theta(tT_s + 2T_s) - 2\theta(tT_s + T_s) + \theta(tT_s)}{T_s^2}, \qquad
    \dot{\theta}(tT_s) \approx \frac{\theta(tT_s + T_s)-\theta(tT_s)}{T_s}
\end{align*}
for $t\in\bbZ$, and define the discrete-time input and output as $u(t) = \tau(tT_s)$ and $y(t) = \theta(tT_s)$, respectively.
Substituting these expressions into \eqref{eq:pendulumCont} yields a discrete-time NARX model of the form \eqref{eq:sysDelay} with an input delay $\nu=2$:
\begin{align} \label{eq:pendulumDisc}
    y(t+2) &= \left(2-\frac{bT_s}{ml^2} \right) y(t+1)
    + \left(-1 + \frac{bT_s}{ml^2} \right)y(t)
    + \frac{gT_s^2}{l} \sin(y(t))
    + \frac{T_s^2}{ml^2} u(t).
\end{align}
After some algebraic manipulation, the corresponding inverse model of \eqref{eq:pendulumDisc} can be obtained as 
\begin{align*}
    u(t) &= c([y(t+2);\zeta(t)]) \\
    &= \frac{ml^2}{T_s^2} \bigg(
    y(t+2)
    - \left(3 - \frac{3bT_s}{ml^2} + \frac{b^2T_s^2}{m^2l^4} \right) y(t) - \frac{gT_s^2}{l}  \sin(y(t)) \\
    &~~~~ - \left(2 - \frac{bT_s}{ml^2} \right)  \left( 
    \left(-1 + \frac{bT_s}{ml^2} \right)y(t-1)
    + \frac{gT_s^2}{l} \sin(y(t-1))
    + \frac{T_s^2}{ml^2} u(t-1) \right)
            \bigg) .
\end{align*}
In this experiment, the parameter values are set to 
\begin{align*}
    m = \SI{1}{\kilogram}, \qquad
    b = \SI{0.4}{\newton\second/\meter}, \qquad
    g = \SI{9.8}{\meter/\second\squared}, \qquad
    l = \SI{0.3}{\meter}, \qquad
    T_s = \SI{0.001}{\second}.
\end{align*}

\subsubsection{Noise-free case}

\begin{figure}[t]
\centering
\begin{tikzpicture}[
  scale=1.0,
  line cap=round, line join=round,
  >={Latex[length=2.2mm]}
]
  \def\L{3.2}            
  \def\th{22}            
  \def\massR{0.16}       
  \def\pivotR{0.06}      
  \def\thetR{2}       

  \coordinate (O) at (0,0);                               
  \coordinate (Up) at (0,\L);                             
  \coordinate (P)  at ({\L*sin(\th)},{\L*cos(\th)});      

  \draw[thick] (-1.2,0) -- (1.2,0);

  \fill (O) circle (\pivotR);

  \draw[densely dashed] (O) -- (Up);

  \draw[thick] (O) -- (P);
  \path (O) -- node[midway, right=2pt] {$l$} (P);

  \fill[white] (P) circle (\massR);
  \draw[thick] (P) circle (\massR);
  \node[right=6pt] at (P) {$m$};

\draw[thick, {Latex[length=2mm,width=1.4mm]}-{Latex[length=2mm,width=1.4mm]}]
  (O) ++(0,\thetR)
  arc[start angle=90, end angle={90-\th}, radius=\thetR];

\pgfmathsetmacro{\thmid}{90 - 0.5*\th}
\node at ($(O)+({0.90*\thetR*cos(\thmid)},{0.90*\thetR*sin(\thmid)})$) {$\theta$};

\def\rtau{0.45}
\def\tauA{15}   
\def\tauB{135} 
\draw[thick,->]
  ($(O)+({\rtau*cos(\tauA)},{\rtau*sin(\tauA)})$)
  arc[start angle=\tauA, end angle=\tauB, radius=\rtau];

\node[right=3pt] at ($(O)+({(\rtau+0.10)*cos(65)},{(\rtau+0.10)*sin(65)})$) {$\tau$};

\end{tikzpicture}
\caption{Inverted pendulum with torque input \(\tau\) and parameters $m$ and $l$.}
\label{fig:inv_pendulum}
\end{figure}
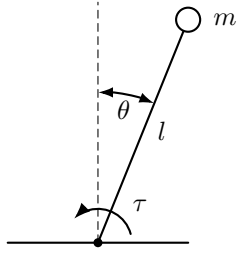

The training dataset $\calD$ was constructed by combining six closed-loop trajectories, each of length $200$, generated by three proportional-integral (PI) controllers \cite{Ogat10} with different gains written in the form
\begin{align*}
    u(t) &= -K_p e(t) - K_I I(t), \\
    e(t) &= y(t), \\
    I(t) &= I(t-1) + T_s e(t), \qquad I(0)=0,
\end{align*}
where $K_p\ge 0$ and $K_I\ge 0$ denote the proportional and integral gains, respectively.
For each controller, we generated two trajectories from distinct initial conditions of the form $\zeta(0)=[a;\,a;\,0]$.
Specifically, the triplets $(K_p,K_I,a)$ used in the experiment were $(20,0.01,\pm0.22)$, $(15,0.01,\pm0.18)$, and $(12.5,0.01,\pm0.16)$.
In summary, a total number of $N=1200$ training data were used. 
This setup reflects an expert-mimicking scenario, in which data generated by an (unknown) expert controller are available while the controller structure and/or parameters are unknown.

We utilized the automatic relevance determination (ARD) Mat\'ern-$5/2$ kernel \cite{WillRasm06}, written by
\begin{align*}
    k(\xi,\xi') = \sigma_f^2 \left( 1+\sqrt{5}r + \frac{5}{3}r^2 \right) \exp(-\sqrt{5}r), ~~~~ r=\sqrt{\sum_{i=1}^4 \frac{(\xi_i-\xi_i')^2}{2\sigma_{l,i}^2} }
\end{align*}
where $\xi_i$ and $\xi_i'$ denote the $i$-th element of $\xi\in\bbR^4$ and $\xi'\in\bbR^4$, respectively.
While alternative kernels that better capture the geometric properties of the inverse model may improve performance \cite[Section~2.3]{Duve14}, we focus on assessing the proposed method using the standard Mat\'ern kernel as a baseline choice.
Instead of KI, we obtained an estimate of the inverse model using the sparse variational Gaussian process (SVGP) \cite{Tits09,GardPlei18} with $192$ inducing points, as it improves computational scalability and naturally accommodates measurement noise.
The kernel hyperparameters $\sigma_f$ and $\{\sigma_{l,i}\}_{i=1}^4$ were optimized by maximizing the variational evidence lower bound (ELBO) of $\mathcal{D}$, using the Adam optimizer with a learning rate of $0.01$ over $150$ iterations.
For the implementation of Algorithm~\ref{alg:pgigp}, we set $\gamma(\epsilon) = 1.005 \epsilon$ and $\bar{\kappa} = 100$, and chose $\Delta = \{0.01,0.02,0.03,0.04,0.05,0.06,0.08,0.1,0.3,0.6\}$. 

The training trajectories obtained from the PI controllers are depicted in the top-left subplot of Fig.~\ref{fig:traj2}.
The middle-left and bottom-left subplots of Fig.~\ref{fig:traj2} show the output trajectories generated by the proposed controller and by a baseline PI controller\footnote{While three PI controllers were used to generate the training dataset, we report results for a single representative PI controller to maintain figure clarity.} with gains $(K_p,K_I)=(15,0.01)$, respectively.
All trajectories are simulated for $T_{\Sim}=500$ time steps from four initial conditions $[0.1;0.1;0]$, $[-0.1;-0.1;0]$, $[0.05;0.05;0]$, and $[-0.05;-0.05;0]$.
As a performance metric, we consider the root mean square error (RMSE), defined by 
\begin{align*}
    \RMSE(y_{[0,T_\Sim]}) := \frac{1}{T_\Sim +1}\sqrt{\sum_{t=0}^{T_\Sim} \left|y(t) \right|^2},
\end{align*}
which quantifies the overall regulation error.
The results are reported in Table~\ref{tab:performance}.
It can be seen that the training trajectories are oscillatory and do not fully regulate the output. Nonetheless, the proposed controller achieves practical output regulation from all considered initial conditions, with  performance comparable to that of the baseline PI controller in terms of RMSE.


\begin{figure}[t]
    \centering
        \input{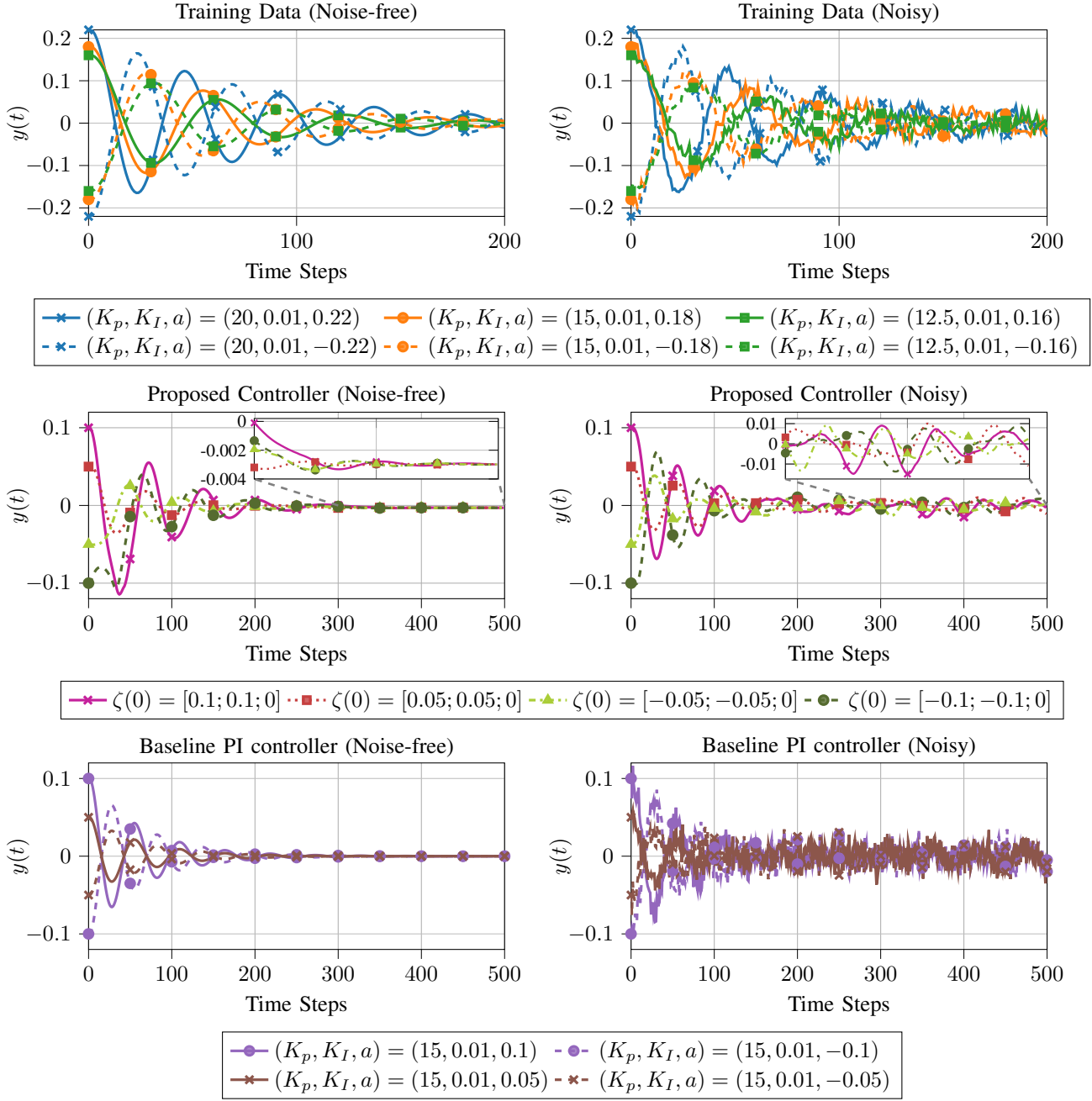}
    \caption{Noise-free (top-left) and noisy (top-right) training trajectories generated by the PI controllers with different gains $(K_p,K_I)$ and initial condition $\zeta(0)=[a;a;0]$. 
    The middle-left and middle-right subplots show the output trajectories from the proposed controller in the noise-free and noisy cases, respectively, from several initial conditions.
    The bottom-left and bottom-right subplots show the corresponding noise-free and noisy output trajectories generated by the baseline PI controller with gains $(K_p,K_I)=(15,0.01)$, respectively.}
    \label{fig:traj2}
\end{figure}

\begin{table*}[htbp]
\centering
\caption{ RMSE comparison of the proposed controller and the baseline PI controller under noise-free and noisy measurements.}
\label{tab:performance}
\begin{tabular}{|c|c|c|c|c|}
\hline
\multirow{2}{*}{\textbf{Initial condition} $\zeta(0)$} & \multicolumn{2}{c|}{\textbf{RMSE (Noise-free)}} & \multicolumn{2}{c|}{\textbf{RMSE (Noisy)}} \\
\cline{2-5}
 & \textbf{Proposed} & \textbf{Baseline PI} & \textbf{Proposed} & \textbf{Baseline PI} \\
\hline
[0.1;0.1;0] & 0.0212 & 0.0192  
            & 0.0162 & 0.0218 \\ 
\hline 
[-0.1;-0.1;0] & 0.0203 & 0.0192  
              & 0.0168 & 0.0220  \\ 
\hline
[0.05;0.05;0] & 0.0081 & 0.0096  
              & 0.0086 & 0.1462  \\ 
\hline
[-0.05;-0.05;0] & 0.0093 & 0.0096 
                & 0.0090 & 0.0142 \\ 
\hline
\end{tabular}
\end{table*}

\subsubsection{Noisy case}
We now consider the case in which the output measurements of \eqref{eq:sysDelay} are corrupted by additive noise, in order to empirically assess the robustness of the proposed controller.
Specifically, we assume that only noisy measurements 
\begin{align*}
    \tilde{y}^\rmd(t) := y^\rmd(t) + v^\rmd (t), ~~~~ t=0,\ldots, T, 
\end{align*}
are available for constructing the dataset $\calD$, where $v^\rmd(t)\sim \calN(0,(\sigma^\rmd)^2)$ denotes a zero-mean Gaussian noise with a standard deviation $\sigma^\rmd \ge 0 $. 
During the online procedure, the output measurements are similarly assumed to be corrupted as 
\begin{align*}
    \tilde{y}(t) := y(t) + v(t), ~~~ t\in \bbZ,
\end{align*}
where $v(t)\sim \calN(0,\sigma^2)$ with $\sigma \ge 0$. 
The controller \eqref{eq:igp2} is accordingly modified as
\begin{align*}
    u(t) = \hat{c} ([y_\rmr(t+2); \tilde{\zeta}(t)]),
\end{align*}
where $\tilde{\zeta}(t):=[\tilde{y}_{[t-n+1,t]}; u_{[t-n+1,t-1]}]$.

We evaluated the proposed controller and the baseline PI controller with the noise standard deviations set to $(\sigma^\rmd, \sigma)= (0.01, 0.01)$.
The top-right subplot of Fig.~\ref{fig:traj2} shows the noisy training trajectories, and the middle-right and bottom-right subplots show the resulting output trajectories generated by the proposed controller and the baseline PI controller, respectively.

It can be seen that the proposed controller still achieves practical output regulation and attains smaller RMSE than the baseline PI controller, also exhibiting reduced oscillations and less chattering.
Compared to the noise-free case, however, a larger steady-state offset is observed, which can be attributed to measurement noise. 
Overall, these results suggest that the proposed controller remains effective under noisy measurements, motivating further work on extending the framework to explicitly account for noise and provide formal guarantees in such settings.


\section{Conclusion}\label{sec:conclusion}
We have presented a data-driven output feedback controller for systems represented in NARX form, using input/output measurement data. 
The controller is constructed by identifying an inverse model of the system via KI and by combining it with a data-driven reference selection framework, described in Section~\ref{sec:control}. 
By leveraging the interpolation error bound of KI, we have established a verifiable sufficient condition on the dataset under which the proposed controller guarantees practical output regulation. 
We have validated the practical utility of the proposed method through numerical simulations.
Future work will consider an explicit treatment of measurement noise arising from sensor degradation and extensions to vector-valued kernel methods to capture coupling effects across multiple inputs.

\bibliographystyle{IEEEtran}
\bibliography{v1/wileyNJD-AMA}

\end{document}